\documentclass[letterpaper, 10pt, conference]{ieeeconf}
\IEEEoverridecommandlockouts 

\usepackage{cite}
\usepackage{algorithmic}
\usepackage{textcomp}
\usepackage{latexsym,amsmath,amssymb,amsfonts}

\usepackage{amsthm}
\usepackage[utf8]{inputenc}
\usepackage{rotating}
\usepackage{epstopdf}
\usepackage[dvipsnames]{xcolor}
\usepackage[normalem]{ulem}
\usepackage{caption}
\usepackage{subcaption}
\usepackage{fancybox}
\usepackage{colortbl}
\usepackage{float}
\usepackage{comment}
\usepackage{mathdots}
\usepackage{dblfloatfix}
\usepackage{graphicx}
\usepackage{placeins}
\usepackage{marginnote}
\usepackage{bm}
\usepackage{url}
\usepackage{empheq}
\let\labelindent\relax
\usepackage{enumitem}
\usepackage[hidelinks]{hyperref}
\usepackage{tikz}
\usetikzlibrary{arrows,shapes,positioning}
\usetikzlibrary{decorations.markings,decorations.pathmorphing,
decorations.pathreplacing}
\usetikzlibrary{calc,patterns,shapes.geometric}
\usetikzlibrary{shapes.multipart,positioning,patterns,backgrounds}

\newtheorem{theorem}{\it \textbf{Theorem}}
\newtheorem{lemma}{\it \textbf{Lemma}}
\newtheorem{definition}{\it \textbf{Definition}}
\newtheorem{remark}{\it \textbf{Remark}}

\newtheorem{proposition}{\it \textbf{Proposition}}
\newtheorem{corollary}{\it \textbf{Corollary}}
\newtheorem{fact}{\it \textbf{Fact}}
\definecolor{cmass}{RGB}{31,111,63}   
\definecolor{ckill}{RGB}{194,26,139}  
\definecolor{cact}{RGB}{138,75,0}
\definecolor{cgrey}{RGB}{90,90,90}
\definecolor{cblue}{RGB}{40,80,160}

\newcommand{\E}{\mathbb{E}}
\newcommand{\R}{\mathbb{R}}

\newcommand{\calP}{\mathcal{P}}
\newcommand{\calU}{\mathcal{U}}

\newcommand{\Lbar}{\Lambda}
\newcommand{\calR}{\mathcal{R}}
\definecolor{revcol}{RGB}{160,40,40}    
\newcommand{\rev}[1]{#1}

\definecolor{edcol}{RGB}{0,90,180}
\definecolor{plcol}{RGB}{0,120,80}
\newcommand{\pl}[1]{#1}
\newcommand{\ed}[1]{#1}
\newcommand{\edm}[1]{#1}
\title{\LARGE \bf
Removal-Only Actuation in Age-Structured Branching Populations: Fundamental Limits of Equilibrium Placement
}
\author{Ouerdia Arezki$^{1}$, Ali Zemouche$^{2}$
\thanks{The authors gratefully acknowledge the support of the IUT Henri Poincar\'e de Longwy, Universit\'e de Lorraine. This work was initiated while O. Arezki was a visiting professor hosted by CRAN (CNRS UMR 7039) and the IUT Henri Poincar\'e de Longwy.}
\thanks{A short version of this paper, restricted to
Sections~\ref{sec:main}--\ref{sec:prop}, has been submitted to the IEEE Control
Systems Letters (L-CSS).}
\thanks{$^{1}$Laboratoire de Math\'ematiques Blaise
Pascal (LMBP), UMR CNRS 6620, Universit\'e Clermont Auvergne, F-63178
Aubi\`ere, France (email: \protect\url{ouerdia.arezki@uca.fr})}
\thanks{$^{2}$Universit\'e de Lorraine, CRAN CNRS UMR 7039, 54400 Cosnes
et Romain, France~(e-mail: \protect\url{ali.zemouche@univ-lorraine.fr}).}
}

\makeatletter
\renewenvironment{proof}[1][\proofname]{\par
  \pushQED{\qed}%
  \normalfont\topsep6\p@\@plus6\p@\relax
  \trivlist
  \item[\hskip\labelsep\itshape #1\@addpunct{.}]\ignorespaces
}{%
  \popQED\endtrivlist\@endpefalse
}

\makeatother

\title{
\rev{Removal-Only Actuation in Age-Structured Branching Populations: Fundamental Limits of Equilibrium Placement}
}
\author{Ouerdia Arezki$^{1}$, Ali Zemouche$^{2}$
\thanks{The authors gratefully acknowledge the support of the IUT Henri Poincar\'e de Longwy, Universit\'e de Lorraine. This work was initiated while O. Arezki was a visiting professor hosted by CRAN (CNRS UMR 7039) and the IUT Henri Poincar\'e de Longwy.}
\thanks{\ed{A short version of this paper, restricted to
Sections~\ref{sec:main}--\ref{sec:prop}, has been submitted to the IEEE Control
Systems Letters (L-CSS).}}
\thanks{$^{1}$Laboratoire de Math\'ematiques Blaise
Pascal (LMBP), UMR CNRS 6620, Universit\'e Clermont Auvergne, F-63178
Aubi\`ere, France (email: \protect\url{ouerdia.arezki@uca.fr})}
\thanks{$^{2}$Universit\'e de Lorraine, CRAN CNRS UMR 7039, 54400 Cosnes
et Romain, France~(e-mail: \protect\url{ali.zemouche@univ-lorraine.fr}).}
}

\makeatletter
\renewenvironment{proof}[1][\proofname]{\par
  \pushQED{\qed}%
  \normalfont\topsep6\p@\@plus6\p@\relax
  \trivlist
  \item[\hskip\labelsep\itshape #1\@addpunct{.}]\ignorespaces
}{%
  \popQED\endtrivlist\@endpefalse
}

\makeatother

\begin{document}
\maketitle
\begin{abstract}
A subcritical age-structured branching population dies out almost surely.
Conditioned on survival, it converges to its Yaglom limit, a quasi-stationary
equilibrium that we take as the operating point for control. We model preventive
removal (culling) as an age-dependent actuator \rev{that raises the mortality
rate and leaves the offspring law untouched}, and we show that its authority
over this equilibrium is \pl{bounded for structural reasons}. \rev{Two facts
drive the result. First, the input is matched to the killing rate but unmatched with
respect to the Foster--Lyapunov drift, so the transmission barrier $\Lbar$ set
by reproduction alone is invariant under such actuation. Second\ed{, and this does
not follow from invariance alone,} the supremum of the reachable decay rates is
$\Lbar+\nu^\star$, where $\nu^\star\le0$ is the Malthusian parameter of the
lineage conditioned never to die childless; the gap $|\nu^\star|$ is given in
closed form and vanishes exactly when no individual has two or more offspring.
Consequently no removal law of this class reaches the barrier, and along the
admissibility boundary the achievable decay rate is governed by the shape of the
actuator rather than by its \pl{size}.} We illustrate these results on a
model calibrated to the 2001 Cumbrian foot-and-mouth outbreak.
\end{abstract}


\ed{\textbf{Keywords:}  Admissible controls, age-structured populations, equilibrium placement, fundamental limitations, monotone systems, reachable sets, stochastic systems.}
\section{Introduction}
An age-structured branching population almost surely dies out in the subcritical regime. Its long-term object is therefore not a stationary distribution but the quasi-stationary distribution (QSD), or Yaglom limit, namely the law of
the population conditioned on survival~\cite{bellman,athreya,tran,yaglom,meleard}. In applications such as epidemic surveillance and invasive-species management, this conditioned law provides the natural equilibrium for control.
In~\cite{companion}, we characterize this equilibrium by proving the existence and uniqueness of the QSD together with the convergence of a Fleming--Viot particle approximation for the uncontrolled plant. \rev{The present paper introduces
preventive removal as a control input and asks how far this actuator can displace
the conditioned equilibrium. The answer is a structural limitation of a specific
actuator class, in the sense of feedback theory~\cite{sbg,freudenberg,stein},
rather than a new probabilistic estimate.}

\rev{The control problem is unusual: \pl{the plant needs no stabilisation}.} It contracts to its conditioned
equilibrium on its own, so the command places the attractor rather than stabilising it. The state is a measure-valued population, observed only through a Fleming--Viot particle estimator. The removal hazard $u$ is the
control input, the regulated output is the decay rate $\lambda_0=\nu_Y(\kappa)$ of the conditioned equilibrium, and a transmission barrier $\Lbar$ bounds the reachable decay rate.

\rev{Everything rests on one modelling hypothesis, which we state at the outset
because it \pl{limits} the scope of every result below: \emph{the actuator is
childless}, i.e.\ it raises the mortality rate and does not alter the offspring
law $(p_n)$. Removal terminates a lineage; it does not make transmission less
likely. Under this hypothesis the barrier $\Lbar$, which is determined by the
uncontrolled lifetime and offspring laws, is invariant (Lemma~\ref{lem:invariance}), so
$\lambda_0(u)<\Lbar$ is immediate. What is \pl{not} immediate\ed{, and is the
mathematical content of Theorem~\ref{thm:sup},} is the exact value of the
supremum, $\Lbar+\nu^\star$, which is strictly below $\Lbar$ by a computable
margin. \pl{Actuators that also reduce transmission act on $\Lbar$ itself and
are not covered here.}}

\pl{This work draws on three lines of research.} Fundamental limitations of
feedback, in the Bode--Freudenberg--Looze
tradition~\cite{bode,freudenberg,sbg,stein}, \pl{give the setting}: a bound on
achievable performance that no control law \rev{in a given class} can \pl{avoid}, because it is a
property of the plant--actuator pair. Positive and monotone
systems~\cite{farina,angeli,rantzer} \pl{provide the ordering tools} behind the
monotone input--output map established below. Epidemic control over
networks~\cite{preciado,nowzari,pare} and the optimal control of age-dependent
population PDEs~\cite{anita,webb,lenhart} \pl{provide the application context.
What is different here is} that the operating point is a \pl{conditioned}
law, not a stationary one, and is available only through particle estimation.

\pl{The contributions of this paper are the following.} We prove that the input is matched to the killing rate
but unmatched with respect to the Foster--Lyapunov drift
(Lemma~\ref{lem:invariance}); every limitation below follows from this single
mismatch. We derive a closed-form, state-independent input constraint set
(Theorem~\ref{thm:envelope}), necessary and independent of the control law, and
sufficient for the proportional family. We show that the input--output map is
monotone and characterise the reachable output set together with its supremum
(Theorem~\ref{thm:sup}), \rev{which is strictly sharper than the elementary
bound $\lambda_0(u)<\Lbar$ inherited from invariance: it identifies the
supremum exactly and quantifies the residual gap in closed form}. We show that\rev{, along the
admissibility boundary,} achievable performance is set by
the input \pl{shape}, not its magnitude (Corollary~\ref{cor:shape}). For
proportional actuation we give an exact certifiable gain interval
(Theorem~\ref{thm:cmax}), \pl{show how fast the certified stability margin
shrinks and what this costs in sensing} near the constraint boundary
(Proposition~\ref{prop:cost}), and prove that the certainty-equivalence gain law
driven by the particle estimator is ISS with respect to the estimation error
(Proposition~\ref{prop:feedback}).



\section{Preliminaries and problem formulation}
\label{sec:plant}
This section recalls the notation and results of~\cite{companion} used
throughout the paper and states the control problem. From a control
perspective, the plant is an autonomous, \ed{open-loop convergent} conditioned
process whose operating point is the Yaglom limit.

\subsection{\rev{Measure-valued plant model}}\label{sec:measure}
Let $G$ be a lifetime distribution on $[0,\infty)$ with continuous density
$g$ and $G(0)=0$, and let
$h(s)=\sum_{n\ge0}p_ns^n$ denote the offspring generating function, where
$\xi\sim(p_n)$, $m:=h'(1)<\infty$, and
$p_0:=\mathbb P(\xi=0)$. Each individual lives for a random lifetime
$T\sim G$ and, upon death, is independently replaced by $\xi$ offspring born
at age zero. Write
$\bar G:=1-G$, $\mu:=g/\bar G$ for the hazard rate,
$\mu^\ast:=\sup_a\mu(a)$, and
$\mu_\infty:=\liminf_{a\to\infty}\mu(a)\le\mu^\ast$.
Because the death intensity
$\sum_i\mu(a_i(t))$ depends on the age configuration rather than on the
population size, the size process is not Markovian. The Markov property is
recovered on the space $\mathcal M_p(\R_+)$ of finite point measures,
equipped with the narrow topology~\cite{tran}. Defining
\begin{equation*}
\eta_t:=\sum_i\delta_{a_i(t)},\;
Z_t:=\langle\eta_t,\mathbf1\rangle,\;
\tau:=\inf\{t>0:\eta_t=0\},
\end{equation*}
and setting $E:=\mathcal M_p(\R_+)\setminus\{0\}$, the extended
generator~\cite{meyntweedie} is, for
$F(\eta)=\Phi(\langle\eta,f\rangle)$,
\begin{multline}
\mathcal{L}F(\eta)=
\Phi'(\langle\eta,f\rangle)\langle\eta,f'\rangle
+\int_{\R_+}\mu(a)\sum_{n\ge0}p_n\\
\Bigl[
\Phi\bigl(\langle\eta,f\rangle-f(a)+nf(0)\bigr)
-\Phi\bigl(\langle\eta,f\rangle\bigr)
\Bigr]\eta(da),
\label{eq:gen}
\end{multline}
where individuals age at unit speed between jumps, and an individual of age
$a$ is replaced at rate $\mu(a)$ according to
$\eta\mapsto\eta-\delta_a+\xi\delta_0$. For $\Phi=\mathrm{id}$,
\[
\mathcal L\langle\eta,f\rangle=\langle\eta,Af\rangle,
\;
(Af)(a)=f'(a)+\mu(a)\bigl(mf(0)-f(a)\bigr).
\]
Extinction occurs only when a singleton dies without offspring, so the killing
rate is
\begin{equation}
\kappa(\eta)=p_0\,\mu(a)\,\mathbf1_{\{\eta=\delta_a\}},
\;
\|\kappa\|_\infty=p_0\mu^\ast<\infty.
\label{eq:kappa}
\end{equation}
\pl{Absorption is a boundary that the size process cannot cross; on the lifted
state space it becomes a bounded killing mechanism.} Let $X$ denote the process on $E$ obtained
by removing the transition $\delta_a\mapsto0$. Then $\eta$ is $X$ killed at rate
$\kappa$, with generator
$\mathcal L_XF=\mathcal LF+\kappa(F-F(0))$. Under $X$, a singleton can leave the
large-age region only through successful reproduction, at rate $(1-p_0)\mu(a)$,
so the maximal drift rate available in a Foster--Lyapunov argument is
\begin{equation}
\Lbar:=(1-p_0)\mu_\infty,
\label{eq:barrier}
\end{equation}
rather than $\mu_\infty$. We call $\Lbar$ the transmission barrier; it is
determined solely by the reproduction mechanism. Finally, the Malthusian
parameter $\lambda_0$ is the unique solution of the Euler--Lotka equation
$m\int_0^\infty e^{\lambda t}\,dG(t)=1$.
\rev{We use the decay-rate sign convention throughout: $\lambda_0>0$ in the
subcritical regime, and $-\lambda_0$ is the exponential growth rate.}

\subsection{\rev{Standing assumptions and prior results}}\label{sec:background}
\rev{We collect here the conditions used in the sequel. Each statement
below specifies the subset it requires.}
\begin{itemize}[leftmargin=2.2em,itemsep=1pt]
\item[(H1)] $G$ is absolutely continuous with continuous density $g$,
$G(0)=0$, $\operatorname{supp}(G)=[0,\infty)$, and
$\inf_{[\ell,L]}g>0$ for every $0<\ell<L<\infty$;
\item[(H2)] $m<1$;
\item[(H3)] $p_1>0$;
\item[(H4)] $0<\mu_\infty\le\mu^\ast<\infty$;
\item[(H5)] $p_0\mu^\ast<\Lbar$;
\item[(\rev{H6})] $\sum_n n^\alpha p_n<\infty$ for some $\alpha>1$ satisfying
$\alpha\lambda_0>\Lbar$.
\end{itemize}
Conditions (H1)--(H5) are inherited from~\cite{companion}\rev{: (H2) fixes the
subcritical regime, (H5) is the \pl{main} structural requirement, and (H1),
(H3) and~(H4) are regularity and non-degeneracy conditions on the lifetime and
offspring laws. Assumption~(H6) is what the controlled problem adds:} it
strengthens the companion condition
$\alpha\lambda_0>p_0\mu^\ast=\|\kappa\|_\infty$ to $\alpha\lambda_0>\Lbar$,
so that a single exponent $\alpha$ serves for every admissible input, the
killing rate being allowed to rise up to $\Lbar$. It is not restrictive in the
calibrated application, whose offspring distribution has moments of every order.
Moreover, (H5) implies $p_0<1/2$, hence $m>1/2$.

\ed{The certificates used throughout the paper are the following.}
\begin{definition}[\ed{Admissible certificate}]
\label{def:cert}
\ed{Let $u\in L^\infty_+(\R_+)$. An \emph{admissible certificate} for $X_u$ is
a triple $(V,\lambda_1,C)$ in which $V:E\to[0,\infty)$ belongs to the extended
domain of $\mathcal L_{X_u}$, has relatively compact sublevel sets, satisfies
$B_V:=\sum_{n\ge1}p_nV(n\delta_0)<\infty$ and has $W:=V(\delta_\cdot)$ of class
$C^1$ and eventually non-decreasing, and in which $C<\infty$ and
$\lambda_1>\|\kappa_u\|_\infty$ satisfy}
\[
\ed{\mathcal L_{X_u}V\le-\lambda_1V+C \qquad\text{on }E .}
\]
\ed{We call $\lambda_1$ the certified drift rate.}
\end{definition}
\rev{Three results of~\cite{companion} are used repeatedly below, and we state
them without proof.}
\begin{fact}[Malthusian parameter and reproductive value {\rm\cite[Lem.~2.2]{companion}}]
\label{fact:malthus}
Under {\rm(H1), (H2), (H4)} and {\rm(H5)}:
\begin{enumerate}\itemsep1pt
\item[(i)] $\int e^{\theta a}\,dG(a)<\infty$ for every $\theta<\mu_\infty$;
\item[(ii)] the Euler--Lotka equation has a unique solution
$0<\lambda_0<\Lbar$;
\item[(iii)] the reproductive value $v(a)=m\,\E[e^{\lambda_0(T-a)}\mid T>a]$
belongs to $C_b^1(\R_+)$, satisfies $v(0)=1$, $m\le v\le v_{\max}$, and
$Av=-\lambda_0v$.
\end{enumerate}
The stable-age profile and reproductive value coincide with the eigenelements of
the associated renewal semigroup~\cite[Thm.~3.8]{bansaye2020}.
\end{fact}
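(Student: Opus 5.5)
The plan is to prove the three items in order, working directly with the hazard representation $\bar G(t)=\exp(-\int_0^t\mu)$ and with the integration-by-parts identity
\[
\int_0^\infty e^{\theta t}\,dG(t)=1+\theta\int_0^\infty e^{\theta t}\bar G(t)\,dt ,
\]
valid whenever the right-hand side is finite.

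\textbf{Item (i).} Fix $\theta<\mu_\infty$ and pick $\varepsilon>0$ with $\theta+\varepsilon<\mu_\infty$. By definition of $\mu_\infty=\liminf\mu$ there is $a_0$ such that $\mu(a)\ge\theta+\varepsilon$ for all $a\ge a_0$. Hence
\[
\bar G(t)\le \bar G(a_0)\,e^{-(\theta+\varepsilon)(t-a_0)}\qquad (t\ge a_0).
\]
It follows that $\int e^{\theta t}\bar G(t)\,dt<\infty$, and (i) follows from the identity above. The same estimate, applied to the conditioned residual lifetime, will be reused in (iii).

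\textbf{Item (ii).} Set $\varphi(\lambda):=m\int e^{\lambda t}\,dG(t)$ on $(-\infty,\mu_\infty)$. Then $\varphi$ is finite by (i), continuous, and strictly increasing because $G$ has full support by (H1). Moreover $\varphi(0)=m<1$ by (H2), so any root is positive and unique. The step I expect to be the crux is to show that the root lies strictly below $\Lbar$. This requires a lower bound on $\varphi$, which I obtain from $\mu\le\mu^\ast$. That bound gives $\bar G(t)\ge e^{-\mu^\ast t}$, hence for $\lambda<\mu^\ast$
\[
\varphi(\lambda)\ge m\Bigl(1+\frac{\lambda}{\mu^\ast-\lambda}\Bigr)=\frac{m\mu^\ast}{\mu^\ast-\lambda}.
\]
The right-hand side equals $1$ at $\lambda^\sharp=\mu^\ast(1-m)$. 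Since $m=\sum_n np_n\ge 1-p_0$, we get $\lambda^\sharp\le p_0\mu^\ast<\Lbar\le\mu_\infty$ by (H5). Therefore $\varphi(\lambda^\sharp)\ge1$ with $\lambda^\sharp<\Lbar$, and the intermediate value theorem yields $0<\lambda_0\le\lambda^\sharp<\Lbar$.

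\textbf{Item (iii).} I will write
\[
v(a)=m\,e^{-\lambda_0 a}\,\frac{\int_a^\infty e^{\lambda_0 t}\,dG(t)}{\bar G(a)},
\]
so that $v(0)=\varphi(\lambda_0)=1$. Since $e^{\lambda_0(T-a)}\ge1$ on $\{T>a\}$, we have $v\ge m$.

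\emph{Upper bound.} Because $\lambda_0<\Lbar\le\mu_\infty$, choose $\theta\in(\lambda_0,\mu_\infty)$ and $a_0$ with $\mu\ge\theta$ on $[a_0,\infty)$. For $a\ge a_0$ the conditional residual tail satisfies $\mathbb P(T-a>s\mid T>a)\le e^{-\theta s}$. This gives $\E[e^{\lambda_0(T-a)}\mid T>a]\le\theta/(\theta-\lambda_0)$. On $[0,a_0]$, $v$ is continuous and hence bounded, so $v\le v_{\max}<\infty$.

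\emph{Eigen-equation and regularity.} Differentiating the quotient, using $g=\mu\bar G$ and continuity of $g$, gives
\[
v'=-\lambda_0 v+\mu\,(v-m).
\]
This is continuous and bounded by $\lambda_0v_{\max}+\mu^\ast v_{\max}$, so $v\in C^1_b(\R_+)$. Substituting into $A$ with $v(0)=1$, the terms $\mu(v-m)$ and $\mu(m-v)$ cancel, leaving $Av=-\lambda_0v$.

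The final identification with the eigenelements of the renewal semigroup is quoted from~\cite[Thm.~3.8]{bansaye2020} and is not reproved.
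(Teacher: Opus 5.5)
Your proof is correct and uses the same mechanism the paper relies on. The paper does not reprove this fact (it is quoted from the companion paper), but Remark~\ref{prop:separation} shows that the intermediate-value step rests on your lower bound $m\int e^{\lambda t}\,dG\ge m\mu^\ast/(\mu^\ast-\lambda)$, with (H5) entering only through $(1-m)\mu^\ast<\Lbar<\mu_\infty$; evaluating at $\lambda^\sharp=(1-m)\mu^\ast$ rather than at $\Lbar$ is equivalent and gives the slightly sharper localisation $\lambda_0\le(1-m)\mu^\ast\le p_0\mu^\ast$.
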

\begin{fact}[Necessity of the transmission barrier {\rm\cite[Prop.~3.2]{companion}}]
\label{fact:necessity}
Assume {\rm(H2)} and {\rm(H4)}. \ed{If the uncontrolled process $X$ admits an
admissible certificate $(V,\lambda_1,C)$ in the sense of
Definition~\ref{def:cert} with $u\equiv0$, then necessarily
$\lambda_1\le\Lbar$, so that {\rm(H5)} is necessary. It is also sufficient:
under {\rm(H1)--(H6)} the certificate underlying Fact~\ref{fact:selection}
is admissible, so the class is non-empty.}
\end{fact}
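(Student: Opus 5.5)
The necessity part can be proved by testing the drift inequality of Definition~\ref{def:cert} on singletons only. Take $u\equiv0$ and $\eta=\delta_a$, and write $W(a)=V(\delta_a)$. Under $X$ the transition $\delta_a\mapsto0$ has been removed, so the generator \eqref{eq:gen}, corrected by the killing term, gives
\[
\mathcal L_XV(\delta_a)=W'(a)+\mu(a)\sum_{n\ge1}p_n\bigl[V(n\delta_0)-W(a)\bigr]
=W'(a)+\mu(a)B_V-(1-p_0)\mu(a)W(a).
\]
Inserting this into $\mathcal L_XV\le-\lambda_1V+C$ and rearranging yields
\[
\bigl(\lambda_1-(1-p_0)\mu(a)\bigr)W(a)\le C-W'(a)-\mu(a)B_V .
\]
Because $W$ is $C^1$ and eventually non-decreasing, $W'(a)\ge0$ for all large $a$. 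Since $\mu\ge0$ and $B_V\ge0$, the right-hand side is then at most $C$.

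Next I would show that $W(a)\to\infty$ as $a\to\infty$. In the narrow topology the family $\{\delta_a:a\ge a_0\}$ is not tight, since its mass escapes to infinity. Hence no sublevel set $\{V\le M\}$, which is relatively compact, can contain $\delta_a$ for arbitrarily large $a$. This forces $W(a)>M$ eventually, for every $M$.

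Now choose a sequence $a_k\to\infty$ with $\mu(a_k)\to\mu_\infty$, which exists by the definition of the $\liminf$ and is finite by (H4). Dividing the inequality above by $W(a_k)>0$ gives
\[
\lambda_1-(1-p_0)\mu(a_k)\le C/W(a_k)\to0,
\]
hence $\lambda_1\le(1-p_0)\mu_\infty=\Lbar$. Admissibility also requires $\lambda_1>\|\kappa\|_\infty=p_0\mu^\ast$ by \eqref{eq:kappa}. Combining the two bounds gives $p_0\mu^\ast<\Lbar$, which is (H5). This proves necessity.

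For sufficiency I would take the Lyapunov function underlying Fact~\ref{fact:selection} from~\cite{companion}. I expect it to be of the form $V(\eta)=\langle\eta,\varphi\rangle^{\alpha}$ (or a sum of such terms), with $\varphi(a)$ growing like $e^{\theta a}$ for some $\theta<\mu_\infty$ and $\alpha$ as in (H6). The drift rate would be $\lambda_1$ chosen in $(p_0\mu^\ast,\Lbar)$, an interval that is nonempty by (H5).

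Each clause of Definition~\ref{def:cert} then has to be checked in turn.
\begin{itemize}
\item $V$ lies in the extended domain. This follows from the moment bound in Fact~\ref{fact:malthus}(i) together with localisation along the jump times.
\item The sublevel sets are relatively compact. Bounding $\langle\eta,\varphi\rangle$ bounds both the mass and the tails of the age configuration.
\item $B_V=\sum_{n\ge1}p_nn^\alpha\varphi(0)^\alpha<\infty$. This is exactly the moment condition (H6).
\item $W(a)=\varphi(a)^\alpha$ is $C^1$ and eventually increasing, by construction of $\varphi$.
\item The drift inequality holds with $\lambda_1>\|\kappa\|_\infty$. This is the estimate of the companion paper.
\end{itemize}

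I expect the main obstacle to be this last drift verification away from singletons. There the jump term $\langle\eta,\varphi\rangle^\alpha$ after a birth must be controlled through convexity and the moment condition, with $\alpha\lambda_0>\Lbar$ used to absorb the branching excess into the constant $C$. For this step I would rely on the companion estimate rather than redo it, and only confirm that the constants it produces are compatible with the admissibility requirement $\lambda_1>p_0\mu^\ast$.
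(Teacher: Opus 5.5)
Your necessity argument is correct and is the one the paper itself uses (it is spelled out in the proof of Theorem~\ref{thm:envelope}): the singleton drift $W'-(1-p_0)\mu W+\mu B_V$, the bound $[\lambda_1-(1-p_0)\mu(a)]W(a)\le C$ for large $a$, $W\to\infty$ from the relatively compact sublevel sets, and a sequence with $\mu(a_k)\to\mu_\infty$. You argue directly rather than by contradiction and make the non-tightness of $\{\delta_a\}$ explicit, which is fine.

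For sufficiency the paper likewise gives nothing beyond citing the companion's certificate, so your deferral matches it. Your guessed parameters, however, should not be taken literally: with $W=\varphi^\alpha$ and $\varphi'/\varphi\to\theta$, your own singleton inequality along $a_k$ gives $\lambda_1+\alpha\theta\le\Lbar$, so admissibility forces $\alpha\theta<\Lbar-p_0\mu^\ast$, which is far more restrictive than $\theta<\mu_\infty$.
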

\begin{fact}[Selection, contraction, and estimability {\rm\cite[Thms.~3.6, 3.8, Cor.~3.9]{companion}}]
\label{fact:selection}
Under {\rm(H1)--(H6)}, Assumption~H of~\cite{cv1} holds for $X$. The killed
semigroup admits a unique QSD $\nu_{\mathrm{QSD}}=\nu_Y$, with decay rate
$\lambda_0$, and
\[
\|\mathbb P_\varrho(\eta_t\in\cdot\mid\tau>t)-\nu_Y\|_{\mathrm{TV}}
\le C_0\varrho(V)e^{-\gamma t}.
\]
Moreover, the stationary empirical measure of the $N$-particle Fleming--Viot
system~\cite{villemonais} satisfies
\begin{equation}
\E\!\left|\mathcal X^N(\varphi)-\nu_Y(\varphi)\right|
\le dN^{-\varpi}\|\varphi\|_\infty,
\;
\varpi=\frac{\gamma}{2(\|\kappa\|_\infty+\gamma)},
\label{eq:rate}
\end{equation}
where $d$ depends on the uniform moment estimate
$\sup_N\E[\mathcal X^N(V)]\le C/(\lambda_1-\|\kappa\|_\infty)$. Finally,
\begin{equation}
\lambda_0=\nu_Y(\kappa),
\label{eq:selfcons}
\end{equation}
a self-consistency identity showing that $\lambda_0$ is an estimable output.
\end{fact}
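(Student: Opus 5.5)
The plan is to verify the two conditions of Assumption~H of~\cite{cv1} for $X$ killed at rate $\kappa$, and then read off the three conclusions from the general theory. The first condition is a local Doeblin minorisation on a compact sublevel set. The second is a Lyapunov drift whose rate strictly dominates $\|\kappa\|_\infty=p_0\mu^\ast$. The conclusions are: contraction from the Champagnat--Villemonais theorem, the particle estimate from the Fleming--Viot results of~\cite{villemonais}, and the identity $\lambda_0=\nu_Y(\kappa)$ by a direct martingale argument.

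\emph{Lyapunov step.} I would take
\[
V(\eta)=\langle\eta,e^{\theta\cdot}\rangle+Z^\alpha,\qquad p_0\mu^\ast<\theta<\Lbar\le\mu_\infty .
\]
\begin{itemize}[leftmargin=1.5em]
\item For $Z\ge2$ the Euler--Lotka subcriticality ($m<1$ by (H2)) together with (H6) makes the $Z^\alpha$ term drift at a rate arbitrarily close to $\alpha\lambda_0$. Here (H6) controls the jump term $\sum_n p_n\bigl((Z-1+n)^\alpha-Z^\alpha\bigr)$, and $\alpha\lambda_0>\Lbar$ by (H6).
\item The exponential age term accounts for old individuals. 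At large age $a$, $\mu(a)\gtrsim\mu_\infty$ by (H4).
\item For a singleton $\delta_a$ under $X$ (the transition $\delta_a\mapsto0$ removed), the only exit from large ages is reproduction, at rate $(1-p_0)\mu(a)$. The generator applied to $W(a)=e^{\theta a}$ gives $\theta W-(1-p_0)\mu(a)W+O(1)$. This is $\le-\lambda_1W+C$ for any $\lambda_1<\Lbar-\theta+\ldots$ after tuning, and in particular for some $\lambda_1\in(\|\kappa\|_\infty,\Lbar)$ by (H5).
\item Fact~\ref{fact:malthus}(i) guarantees $\int e^{\theta a}dG<\infty$, so $V$ lies in the extended domain. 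Sublevel sets $\{V\le R\}$ bound both $Z$ and the maximal age, hence are relatively compact in the narrow topology.
\end{itemize}
This step is the main obstacle. The rate at the singleton boundary is exactly where $\Lbar$ rather than $\mu_\infty$ appears, and the cross terms between the two pieces of $V$ must be absorbed uniformly.

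\emph{Minorisation step.} On $K=\{V\le R\}$ I would exhibit an explicit path. All current individuals die before a fixed time $t_1$; this has probability bounded below by (H1), since $\inf g>0$ on compacts. Exactly one of them leaves exactly one child, using $p_1>0$ from (H3), and the others leave none. This lands the process on a singleton $\delta_{s}$ with $s$ having a density bounded below on an interval. Iterating once more smooths the age, giving $\mathbb P_\eta(X_{t_0}\in\cdot,\ t_0<\tau)\ge c\,\nu(\cdot)$ with $\nu$ the law of $\delta_U$, $U$ uniform on an interval. The companion Harnack-type condition, comparing $\mathbb P_\nu(t<\tau)$ with $\sup_K\mathbb P_\eta(t<\tau)$, follows from the same construction and the drift bound $\lambda_1>\|\kappa\|_\infty$.

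\emph{Conclusions.} With Assumption~H in hand, \cite{cv1} yields a unique QSD $\nu_Y$ and the total-variation bound with constant $C_0\varrho(V)$. \cite{villemonais} yields~\eqref{eq:rate}, with the moment bound $\sup_N\E[\mathcal X^N(V)]\le C/(\lambda_1-\|\kappa\|_\infty)$ obtained by applying the drift inequality to the particle system. For~\eqref{eq:selfcons}, the QSD satisfies $\mathbb P_{\nu_Y}(\tau>t)=e^{-\theta t}$ with $\theta=\nu_Y(\kappa)$ by differentiating at $t=0$. I then identify $\theta=\lambda_0$ through Fact~\ref{fact:malthus}(iii): $Av=-\lambda_0v$ makes $e^{\lambda_0t}\langle\eta_t,v\rangle$ a martingale, so $\E_{\nu_Y}\langle\eta_t,v\rangle=e^{-\lambda_0t}\nu_Y(\langle\cdot,v\rangle)$. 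The left side also equals $e^{-\theta t}\nu_Y(\langle\cdot,v\rangle)$ by quasi-stationarity. Since $m\le v\le v_{\max}$ and $\nu_Y(Z)\le\nu_Y(V)<\infty$, the integral $\nu_Y(\langle\cdot,v\rangle)$ is finite and positive, so $\theta=\lambda_0$.
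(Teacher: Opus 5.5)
The paper gives no proof of this statement; it imports it from the companion paper, so your proposal has to stand on its own. Its architecture is sound. The minorisation path via $p_1>0$ and (H1) is the standard one. Your derivation of $\lambda_0=\nu_Y(\kappa)$ from the martingale $e^{\lambda_0t}\langle\eta_t,v\rangle$ and $\mathbb P_{\nu_Y}(\tau>t)=e^{-\nu_Y(\kappa)t}$ is correct.

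The gap is in the step you flag as the main obstacle: $V=\langle\eta,e^{\theta\cdot}\rangle+Z^\alpha$ is not a certificate, for two independent reasons.

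\emph{The head count does not decay at rate $\lambda_0$.} One has $\mathcal L Z^\alpha=\sum_i\mu(a_i)\,\E[(Z-1+\xi)^\alpha-Z^\alpha]\approx-\alpha(1-m)Z^{\alpha-1}\sum_i\mu(a_i)$. This depends on the ages and is negligible for a young population. In the calibrated model of Section~\ref{sec:num}, $\mu(0)=0$, so at $\eta=Z\delta_0$ you get $\mathcal L Z^\alpha=0$ while $\langle Z\delta_0,Ae^{\theta\cdot}\rangle=\theta Z>0$. The inequality $\mathcal LV\le-\lambda_1V+C$ therefore fails for large $Z$, whatever $\theta>0$.

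The rate $\lambda_0$ belongs to the reproductive-value-weighted mass, not to the head count. The missing idea is to use $\langle\eta,v\rangle^\alpha$ with $v$ from Fact~\ref{fact:malthus}(iii):
\begin{itemize}
\item since $Av=-\lambda_0v$ and $v(0)=1$, the first-order part of $\mathcal L\langle\eta,v\rangle^\alpha$ is exactly $-\alpha\lambda_0\langle\eta,v\rangle^\alpha$;
\item the Taylor remainder of the jumps is of lower order in $Z$ because $\sum_n n^\alpha p_n<\infty$, which is where (H6) really enters;
\item $m\le v\le v_{\max}$ keeps $\langle\eta,v\rangle^\alpha$ comparable to $Z^\alpha$, so sublevel sets stay compact;
\item the $O(Z)$ positive drift of the age weight on young individuals is absorbed because $\alpha>1$.
\end{itemize}

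\emph{Your range $\theta>p_0\mu^\ast$ points the wrong way.} On the singleton stratum, write $W(a)=V(\delta_a)$. Then $\mathcal L_XV(\delta_a)=W'(a)-(1-p_0)\mu(a)W(a)+\mu(a)\sum_{n\ge1}p_nV(n\delta_0)$, which is the identity used in the proof of Theorem~\ref{thm:envelope}. So an age weight growing like $e^{\theta a}$ certifies a rate of at most $\Lbar-\theta$. With $\theta>p_0\mu^\ast$ this forces $\lambda_1<\Lbar-p_0\mu^\ast$. (H5) does not make that bound exceed $\|\kappa\|_\infty=p_0\mu^\ast$: in the calibrated model $\Lbar-p_0\mu^\ast=0.03<0.11$. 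Your ``$+\ldots$ after tuning'' conceals exactly this.

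The fix is as follows:
\begin{itemize}
\item take $0<\theta<\Lbar-p_0\mu^\ast$ (any slowly growing age weight will do), and then $\lambda_1\in(p_0\mu^\ast,\Lbar-\theta)$;
\item old individuals in configurations with $Z\ge2$ decay at the larger rate $\mu(a)-\theta$, so they do not bind;
\item $\alpha\lambda_0>\Lbar>\lambda_1$ from (H6) makes the size component compatible with this choice of $\lambda_1$.
\end{itemize}

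With both corrections the rest of your plan goes through. The Harnack-type comparison still needs an actual argument rather than one sentence. The right mechanism is the trivial bound $\mathbb P_\eta(t<\tau)\ge e^{-\|\kappa\|_\infty t}$, set against the $e^{-\lambda_1 t}$ decay that $V$ enforces outside compacts.
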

Two consequences of~\eqref{eq:selfcons} are used repeatedly:
\begin{equation}
\lambda_0=\nu_Y(\kappa)\le\|\kappa\|_\infty,
\label{eq:lam0kappa}
\end{equation}
with strict inequality whenever $\mathbb P(\xi\ge2)>0$, since $\nu_Y$ then
charges $\{Z\ge2\}$; and every quasi-stationary distribution $\nu$ satisfies
$\nu(\kappa)\le\|\kappa\|_\infty$.

\subsection{\rev{Control problem}}\label{sec:problem}
Preventive removal (culling) acts as an additional age-dependent hazard
$u(a)\ge0$: an individual removed this way terminates its lineage, whereas a
natural death replaces it by $\xi\sim(p_n)$. Write $X_u$ for the corresponding
unkilled process, obtained by deleting the transition $\delta_a\mapsto0$, so
that $u$ enters only through the singleton killing rate $\kappa_u=p_0\mu+u$. The
plant is the controlled conditioned flow
$\Sigma:\ \dot\varrho_t=\mathcal L_u^{\ast}\varrho_t$ on $\calP(E)$, with input
$u\in L^\infty_+(\R_+)$. It is autonomous and \ed{open-loop convergent}: whenever an
admissible Foster--Lyapunov certificate exists (Fact~\ref{fact:necessity}), the
flow contracts to a unique attractor $\nu_Y(u)$ at rate $\gamma(u)$. The
regulated output is the decay rate of that attractor,
$y(u)=\lambda_0(u)=\nu_Y(u)(\kappa_u)$, which by~\eqref{eq:selfcons}
and~\eqref{eq:rate} is read from the particle system through
$\widehat y_N=\mathcal X^N(\kappa_u)$, with
$\E|\widehat y_N-y|\le d\,\|\kappa_u\|_\infty N^{-\varpi}$. The input is
constrained to the \rev{\emph{certifiable} class}
\[
\rev{\calU_{\mathrm c}}:=\{u\in L^\infty_+(\R_+):\text{$X_u$ admits an admissible certificate}\}
\]
\ed{in the sense of Definition~\ref{def:cert},}
\rev{for which Theorem~\ref{thm:envelope} gives an explicit outer bound.}
The control problem is equilibrium placement under certifiability~(Fig.~\ref{fig:loop}):
\begin{enumerate}[leftmargin=2.2em,itemsep=1pt]
\item[(i)] \ed{bound} the reachable output set
$\calR:=\{\lambda_0(u):u\in\calU\rev{_{\mathrm c}}\}$ \ed{and locate its
supremum};
\item[(ii)] for a target $\lambda^\star$ in the interior of $\calR$, construct
an output-feedback law $u=\mathcal K(\widehat y_N)$ placing the attractor at
$\nu_Y(\lambda^\star)$, and bound the placement error in $N$.
\end{enumerate}
No stabilisation is required, the flow converges on its own, so the design question is which equilibrium can be assigned, and at what certified cost. The
reachable-set answer is a limitation of the plant--actuator pair, in the spirit
of the fundamental limitations of feedback~\cite{bode,freudenberg,sbg}.
\begin{figure}[h!]
\centering
\begin{tikzpicture}[>=stealth',
  blk/.style={draw,rounded corners=1pt,align=center,font=\scriptsize,
              minimum height=7.5mm,inner sep=2.5pt}]
\node (sum) [draw,circle,inner sep=0.8pt,font=\scriptsize] at (0,0) {$-$};
\node (ref) [font=\tiny,left=3.5mm of sum] {$\lambda^\star$};
\node (inv) [blk,right=3.5mm of sum,fill=ckill!8,text width=17mm]
      {feedback law\\ $u=\mathcal K(\widehat y_N)$};
\node (act) [blk,right=3mm of inv,text width=12mm] {actuator\\ $u\in\rev{\calU_{\mathrm c}}$};
\node (pl)  [blk,right=3mm of act,fill=cmass!10,text width=21mm]
      {conditioned flow\\ contracts at $\gamma(u)$\\ to $\nu_Y(u)$};
\node (est) [blk,below=6mm of pl,text width=32mm]
      {Fleming--Viot estimator, $N\ge2$\\
       $\widehat y_N=\mathcal{X}^N(\kappa_u)$, error $O(dN^{-\varpi})$};
\draw[->] (ref) -- (sum);   \draw[->] (sum) -- (inv);
\draw[->] (inv) -- (act);   \draw[->] (act) -- (pl);
\draw[->] (pl) -- (est);
\draw[->] (est) -| (sum);
\end{tikzpicture}
\caption{\rev{Control structure. The removal input $u$ enters the conditioned flow
through the killing rate $\kappa_u$; the regulated output $\lambda_0(u)$ is
estimated by the Fleming--Viot particle system ($N\ge2$) and compared with the
target $\lambda^\star$.}}
\label{fig:loop}
\end{figure}



\section{Main results: Reachable decay rates and maximal actuator authority}
\label{sec:main}
\rev{This section develops the control theory of childless actuation. The
following lemma isolates the single structural mismatch from which every
limitation below is derived: the input enters the killing channel only, and the
generator of the unkilled dynamics, which carries the Foster--Lyapunov
drift, is left invariant.}
\begin{lemma}
\label{lem:invariance}
\rev{Let $u\ge0$ be bounded and measurable. Then:}
\begin{enumerate}\itemsep1pt
\item[(i)]
the mortality rate of the controlled process is
\begin{align*}
\kappa_u(\eta)
&=
\bigl[p_0\mu(a)+u(a)\bigr]\mathbf1_{\{\eta=\delta_a\}},\\
\|\kappa_u\|_\infty
&=
\sup_{a\ge0}\bigl[p_0\mu(a)+u(a)\bigr];
\end{align*}
\item[(ii)]
the generator of the unkilled dynamics is invariant on the singleton stratum:
\[
\mathcal L_{X_u}F(\delta_a)
=
\mathcal L_XF(\delta_a),
\qquad a\ge0,
\]
for every $F$ in the extended domain.
\end{enumerate}
\end{lemma}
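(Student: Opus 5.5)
The plan is to write down the controlled generator explicitly, as a perturbation of \eqref{eq:gen}, and then read off both claims by comparing transition rates. Culling adds, for each individual of age $a$, a removal transition $\eta\mapsto\eta-\delta_a$ at rate $u(a)$. Natural deaths keep rate $\mu(a)$ and the offspring replacement $\eta\mapsto\eta-\delta_a+\xi\delta_0$. So for $F$ in the extended domain the controlled generator is
\[
\mathcal L_uF(\eta)=\mathcal LF(\eta)+\int_{\R_+}u(a)\bigl[F(\eta-\delta_a)-F(\eta)\bigr]\eta(da).
\]
I will take this as the definition of the controlled model of Section~\ref{sec:problem}. Its well-posedness is inherited from that of \eqref{eq:gen}, since the added jump kernel has total rate at most $\|u\|_\infty Z$.

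For (i), I identify the transitions that lead to $0$. A jump $\eta\mapsto0$ occurs only when $\eta=\delta_a$ is a singleton and one of two things happens: either a natural death with $\xi=0$, at rate $p_0\mu(a)$, or a removal, at rate $u(a)$. If $Z\ge2$, neither a removal nor a childless death empties the configuration, because some other atom survives. Summing the two rates gives $\kappa_u(\eta)=[p_0\mu(a)+u(a)]\mathbf1_{\{\eta=\delta_a\}}$. The sup-norm formula follows because $\kappa_u$ vanishes off the singleton stratum, and on that stratum it is the function $a\mapsto p_0\mu(a)+u(a)$. This supremum is finite since $\mu^\ast<\infty$ by (H4) and $u$ is bounded.

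For (ii), I follow the construction recalled for $X$. $X_u$ is obtained by deleting every transition into $0$, so that $\mathcal L_{X_u}F=\mathcal L_uF+\kappa_u(F-F(0))$. Evaluate this at $\eta=\delta_a$. The removal term of $\mathcal L_u$ equals $u(a)[F(0)-F(\delta_a)]$, and it is exactly cancelled by the $u(a)$ part of $\kappa_u(\delta_a)(F(\delta_a)-F(0))$. What remains is $\mathcal LF(\delta_a)+p_0\mu(a)(F(\delta_a)-F(0))=\mathcal L_XF(\delta_a)$. In words, on a singleton every removal is absorbing, so after killing is stripped off the unkilled dynamics carry no trace of $u$. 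The transport term $\partial_a$ and the reproduction terms with $n\ge1$ are untouched.

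The only delicate point is the extended-domain bookkeeping. I need to check that the cancellation is legitimate for $F$ merely in the extended domain, that is, that $F(0)$ is finite and that the martingale characterisation survives the added bounded-rate jumps. This holds because the extra kernel is bounded on each stratum $\{Z=k\}$, and it suffices here since the identity in (ii) is pointwise on singletons. I also note explicitly that for $Z\ge2$ the generators differ by the term in $u$, so the invariance is indeed restricted to the singleton stratum, as stated.
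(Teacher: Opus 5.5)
Your proposal is correct and is essentially the paper's own proof. The paper likewise adds the removal term $u(a)[\Phi(0)-\Phi(f(a))]$ on singletons, reads off $\kappa_u$ from the two absorbing transitions, and obtains (ii) by deleting the absorbing transition; this is exactly your cancellation through $\mathcal L_{X_u}F=\mathcal L_uF+\kappa_u(F-F(0))$. Your version is written for general $F$ rather than $F=\Phi(\langle\eta,f\rangle)$, and it notes that the generators still differ on $\{Z\ge2\}$, so it is a slightly more explicit form of the same argument.
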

\begin{proof}
For a singleton $\eta=\delta_a$, the control input enters the extended
generator~\eqref{eq:gen} only through the additional removal term
$u(a)[\Phi(0)-\Phi(f(a))]$, yielding
\begin{equation}\label{eq:Lu-singleton}
\begin{split}
\mathcal L^uF(\delta_a)
={}&\Phi'(f(a))\,f'(a)\\
&+\mu(a)\sum_{n\ge0}p_n\bigl[\Phi(nf(0))-\Phi(f(a))\bigr]\\
&+u(a)\bigl[\Phi(0)-\Phi(f(a))\bigr].
\end{split}
\end{equation}
Absorption occurs only from singleton states, through either the childless event,
at rate $p_0\mu(a)$, or controlled removal, at rate $u(a)$. Hence
\[
\kappa_u(\delta_a)=p_0\mu(a)+u(a),
\]
whereas $\kappa_u$ vanishes outside the singleton stratum. This proves~(i).
Removing the absorbing transition from~\eqref{eq:Lu-singleton} yields
\[
\mathcal L_{X_u}F(\delta_a)
=
\Phi'(f(a))\,f'(a)
+\mu(a)\sum_{n\ge1}p_n
\bigl[\Phi(nf(0))-\Phi(f(a))\bigr],
\]
which is independent of $u$. Therefore,
\[
\mathcal L_{X_u}F(\delta_a)
=
\mathcal L_XF(\delta_a),
\]
which proves~(ii).
\end{proof}
Lemma~\ref{lem:invariance} shows that childless actuation can only increase the
killing rate $\kappa_u$, thereby tightening the drift condition
$\lambda_1>\|\kappa_u\|_\infty$ without modifying the drift itself. The
following theorem translates this structural property into a closed-form,
control-law-independent characterization of the admissible actuator class.
\begin{theorem}
\label{thm:envelope}
\ed{Assume {\rm(H2)} and {\rm(H4)}}, and let $u\ge0$ be bounded \rev{and}
measurable. \ed{If $X_u$ admits an admissible certificate $(V,\lambda_1,C)$
in the sense of Definition~\ref{def:cert}, then}
\begin{equation}
\sup_{a\ge0}\bigl[p_0\mu(a)+u(a)\bigr] < \lambda_1
\le
\Lbar,
\label{eq:env0}
\end{equation}
\rev{in particular,}
\begin{equation}
u(a)  < (1-p_0)\mu_\infty-p_0\mu(a),
\qquad a\ge0.
\label{eq:envelope}
\end{equation}
\rev{Since $u$ is arbitrary, $\calU_{\mathrm c}\subseteq\calU$, where}
\begin{equation}
\rev{\calU:=\{u\ge0:\|\kappa_u\|_\infty<\Lbar\}.}
\label{eq:envclass}
\end{equation}
\end{theorem}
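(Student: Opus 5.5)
The chain \eqref{eq:env0} has two halves, and I will prove them separately. The left inequality needs no argument. By Definition~\ref{def:cert}, an admissible certificate requires $\lambda_1>\|\kappa_u\|_\infty$. Lemma~\ref{lem:invariance}(i) evaluates this sup norm as $\sup_{a\ge0}[p_0\mu(a)+u(a)]$, so the strict left inequality follows directly.

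The right inequality $\lambda_1\le\Lbar$ is the substantive part. I will reduce it to Fact~\ref{fact:necessity} through Lemma~\ref{lem:invariance}(ii). The drift inequality $\mathcal L_{X_u}V\le-\lambda_1V+C$ holds on all of $E$, so in particular it holds on the singleton stratum $\{\delta_a:a\ge0\}$. On that stratum, Lemma~\ref{lem:invariance}(ii) gives $\mathcal L_{X_u}V(\delta_a)=\mathcal L_XV(\delta_a)$. Writing $W(a)=V(\delta_a)$, this reads
\[
W'(a)+\mu(a)\sum_{n\ge1}p_n\bigl[V(n\delta_0)-W(a)\bigr]\le-\lambda_1W(a)+C .
\]
This expression no longer involves $u$. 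I then repeat the argument of Fact~\ref{fact:necessity} on this inequality. Since $\sum_{n\ge1}p_n=1-p_0$ and $B_V<\infty$, it rearranges to
\[
W'(a)\le\bigl[(1-p_0)\mu(a)-\lambda_1\bigr]W(a)+C-\mu(a)\sum_{n\ge1}p_nV(n\delta_0),
\]
so that $W'(a)\le[(1-p_0)\mu(a)-\lambda_1]W(a)+C$.

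Suppose, for contradiction, that $\lambda_1>\Lbar=(1-p_0)\mu_\infty$. Because $\mu_\infty=\liminf\mu$, there are arbitrarily large ages $a$ at which $(1-p_0)\mu(a)<\lambda_1-\varepsilon$ for a fixed $\varepsilon>0$. Two facts hold at large ages: $W$ is eventually non-decreasing, so $W'\ge0$, and $V$ has relatively compact sublevel sets. The singletons $\delta_a$ with $a\to\infty$ leave every compact set of $E$, so $W(a)\to\infty$. Taking $a$ large along this sequence gives $0\le W'(a)\le -\varepsilon W(a)+C<0$, which is a contradiction.

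I expect the main obstacle to be exactly this step, and there are two ways to handle it. The cleaner route is to note that Fact~\ref{fact:necessity} only uses the singleton-stratum form of the drift together with the regularity conditions in Definition~\ref{def:cert}. By Lemma~\ref{lem:invariance}(ii), any certificate for $X_u$ then satisfies every hypothesis that proof actually uses, so the fact can be invoked verbatim. Otherwise, I will redo the $\liminf$ argument as sketched above, which only requires that $\mu$ is bounded, as guaranteed by (H4). Assumption (H2) is not needed in this argument beyond its role in the setting of Fact~\ref{fact:necessity}.

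The remaining statements follow immediately. Combining the two halves gives $p_0\mu(a)+u(a)<\Lbar$ for every $a$, which is \eqref{eq:envelope}. Since $u$ was an arbitrary element of $\calU_{\mathrm c}$, this is precisely $\|\kappa_u\|_\infty<\Lbar$, and hence $\calU_{\mathrm c}\subseteq\calU$.
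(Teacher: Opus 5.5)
Your proof is correct and follows the paper's argument. The left inequality comes from the certificate requirement $\lambda_1>\|\kappa_u\|_\infty$ combined with Lemma~\ref{lem:invariance}(i); the right inequality comes from restricting the drift to singletons, where Lemma~\ref{lem:invariance}(ii) removes $u$, and then using $W'\ge0$ eventually and $W(a)\to\infty$ to rule out $\lambda_1>\Lbar$ along ages where $\mu$ approaches $\mu_\infty$. Your fixed-$\varepsilon$ contradiction is only a rephrasing of the paper's sequence $a_k$ with $\mu(a_k)\to\mu_\infty$.
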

\begin{proof}
By Lemma~\ref{lem:invariance}(ii), the singleton drift is independent of the
control:
\[
\mathcal L_{X_u}V(\delta_a)
=
W'(a)
-(1-p_0)\mu(a)W(a)
+\mu(a)B_V.
\]
As in Fact~\ref{fact:necessity},
\[
[\lambda_1-(1-p_0)\mu(a)]W(a)\le C
\]
for sufficiently large $a$, whereas $W(a)\to\infty$ because the sublevel sets of
$V$ are relatively compact. If $\lambda_1>\Lbar$, choose
$a_k\to\infty$ such that $\mu(a_k)\to\mu_\infty$. Then
\[
\lambda_1-(1-p_0)\mu(a_k)\longrightarrow\lambda_1-\Lbar>0,
\]
so $W(a_k)$ remains bounded, a contradiction. Hence
$\lambda_1\le\Lbar$. Together with
$\lambda_1>\|\kappa_u\|_\infty$, this yields~\eqref{eq:env0}, from which
\eqref{eq:envelope} follows.
\end{proof}

\rev{Theorem~\ref{thm:envelope} gives necessity only. The converse inclusion $\calU\subseteq\calU_{\mathrm c}$ holds on the hazard-proportional family (Theorem~\ref{thm:cmax}) and is open in general, so every statement below bounds certifiable performance from above.}

\rev{\pl{One could try to} enlarge $\calU$ by weakening the drift requirement $\lambda_1>\|\kappa_u\|_\infty$ to $\lambda_1>\operatorname{osc}\kappa_u:=\sup\kappa_u-\inf\kappa_u$, as allowed by~\cite[Rem.~1]{cv1}. This brings no improvement here: absorption occurs only from singletons, so $\kappa_u$ vanishes on $\{Z\ge2\}$, \pl{and therefore} $\inf\kappa_u=0$ and $\operatorname{osc}\kappa_u=\|\kappa_u\|_\infty$ for every $u$, including $u\equiv0$. \pl{The envelope~\eqref{eq:envelope} is therefore not caused by the particular norm used in the certificate.}}

\rev{The envelope tightens as the hazard increases, so the available authority
decreases with age and is smallest at ages where the natural childless rate
$p_0\mu(a)$ leaves the least room below $\Lbar$.}

Moreover, the controlled process generally no longer belongs to the Bellman--Harris class, since the effective offspring mean $m\mu(a)/[\mu(a)+u(a)]$ becomes age-dependent. Only proportional actuation $u=c\mu$ preserves the Bellman--Harris structure. Establishing sufficiency for general control requires the controlled reproductive value satisfying
\begin{equation}
A_uf=f'+m\mu f(0)-(\mu+u)f=-\lambda_0(u)f,
\label{eq:agecancel}
\end{equation}
which we derive only for the proportional family in
Section~\ref{sec:prop}.
Lemma~\ref{lem:invariance} also shows that the transmission barrier is invariant
under control and therefore cannot be reached. It remains to determine how
closely the controlled decay rate $\lambda_0(u)$ can approach this barrier.
The controlled decay rate is characterized through the first-moment dynamics.
An individual born at age $0$ produces offspring at age $a$ at rate
$m\mu(a)\bar G_u(a)$, where
\[
\bar G_u(a)=\exp\!\left(-\int_0^a(\mu+u)\right),
\]
so that the first-moment decay rate is the unique root of
\begin{equation}
\Psi_u(\lambda)
:=
m\int_0^\infty
e^{\lambda a}\mu(a)\bar G(a)e^{-\int_0^au}\,da
=1.
\label{eq:ELu}
\end{equation}
For $u=0$, this reduces to the Euler--Lotka equation, whereas for
$u=c\mu$ it becomes
\[
m\int e^{\lambda a}g\,\bar G^{\,c}da=1.
\]
\rev{We write $\lambda_0(u)$ for this root. It is the decay rate
$\nu_Y(u)(\kappa_u)$ of the conditioned equilibrium whenever
Fact~\ref{fact:selection} applies, in particular on the proportional family of
Section~\ref{sec:prop} by Theorem~\ref{thm:cmax}; otherwise
Theorem~\ref{thm:sup} is a statement about the first moment.}
\begin{theorem}
\label{thm:sup}
Assume {\rm(H1), (H2), (H4)} \rev{and {\rm(H5)}. For every measurable $u\ge0$
with $\|\kappa_u\|_\infty\le\Lbar$\ed{, in particular for every $u\in\calU$,}
equation~\eqref{eq:ELu} has a unique root $\lambda_0(u)$, and
$0<\lambda_0(u)\le\Lbar$. Moreover:}
\begin{enumerate}\itemsep1pt
\item[(i)]
\rev{If $u_1,u_2\ge0$ satisfy
$\|\kappa_{u_i}\|_\infty\le\Lbar$ $(i=1,2)$
and
$u_1\le u_2$ pointwise}, then
$\lambda_0(u_1)\le\lambda_0(u_2)$, with strict inequality whenever
$u_1<u_2$ on a set of positive Lebesgue measure.
\item[(ii)]
\rev{Let $u^\star(a):=\Lbar-p_0\mu(a)$. Every $u\in\calU$ satisfies
$u<u^\star$ pointwise, and $u^\star\notin\calU$, yet}
\begin{equation}
\sup_{u\in\calU}\lambda_0(u)=\lambda_0(u^\star),
\label{eq:sup}
\end{equation}
\rev{a supremum that is not attained.}
\item[(iii)]
\rev{Let $\nu^\star$ be the unique root of}
\begin{equation}
m\int_0^\infty
e^{\nu a}\mu(a)\bar G(a)^{1-p_0}\,da=1.
\label{eq:nustar}
\end{equation}
\rev{Then $\lambda_0(u^\star)=\Lbar+\nu^\star$, and $\nu^\star\le0$ with
$\nu^\star=0$ if and only if $\mathbb P(\xi\ge2)=0$.}
\end{enumerate}
\rev{Combining~(ii), (iii) and Theorem~\ref{thm:envelope},}
\[
\lambda_0(u)\ <\ \Lbar+\nu^\star\ \le\ \Lbar
\qquad\text{for every }u\in\calU_{\mathrm c}:
\]
\rev{certifiable childless actuation never reaches the transmission barrier,
and falls short of it by more than the deficit $|\nu^\star|$.}
\end{theorem}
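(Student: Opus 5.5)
I would treat everything through the function $\Psi_u(\lambda)$ in \eqref{eq:ELu}, regarded as a Laplace-type transform of the positive measure $m\mu(a)\bar G(a)e^{-\int_0^a u}\,da$.

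\textbf{Existence and uniqueness of the root.} The first step is to show that, for $\|\kappa_u\|_\infty\le\Lbar$, the map $\lambda\mapsto\Psi_u(\lambda)$ is finite, continuous and strictly increasing on an interval $[0,\lambda_{\max})$. At $\lambda=0$ it gives $\Psi_u(0)\le m\int g=m<1$ by (H2). Since $u\ge0$, the integrand is dominated by the uncontrolled one, and $\Psi_0(\lambda)<\infty$ for $\lambda<\mu_\infty$ by Fact~\ref{fact:malthus}(i). For the upper end, I would use the constraint $u\le\Lbar-p_0\mu$ to bound $e^{-\int_0^a u}\ge e^{-\Lbar a}\bar G(a)^{-p_0}$. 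This gives
\[
\Psi_u(\lambda)\ge m\int_0^\infty e^{(\lambda-\Lbar)a}\mu\bar G^{1-p_0}\,da .
\]
At $\lambda=\Lbar$ the right-hand side equals $\frac{m}{1-p_0}\int d(-\bar G^{1-p_0})=m/(1-p_0)\ge1$; the last inequality holds because $m\ge 1-p_0$, which follows from $\sum_{n\ge1}np_n\ge\sum_{n\ge1}p_n$. By monotone convergence, $\Psi_u(\lambda)\uparrow$ a limit $\ge1$ as $\lambda\uparrow\Lbar$. The intermediate value theorem then gives a unique root in $(0,\Lbar]$. The boundary case in which the root equals $\Lbar$ needs care with finiteness, which I would handle by a left-limit argument.

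\textbf{Part (i): monotonicity.} If $u_1\le u_2$, then $\Psi_{u_2}\le\Psi_{u_1}$ pointwise. The inequality is strict for every $\lambda$ when $u_1<u_2$ on a set of positive measure, because $\int_0^a(u_2-u_1)>0$ for all large $a$ and $\mu\bar G>0$ there by (H1). Since each $\Psi_{u_i}$ is increasing in $\lambda$, the root of $\Psi_{u_2}$ lies at or to the right of that of $\Psi_{u_1}$, and strictly to the right in the strict case.

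\textbf{Part (ii): the supremum.} Pointwise $u<u^\star$ is immediate from \eqref{eq:envelope}. Moreover $\|\kappa_{u^\star}\|_\infty=\Lbar$, so $u^\star\notin\calU$. Monotonicity gives $\lambda_0(u)\le\lambda_0(u^\star)$ for every $u\in\calU$. To show the supremum is approached, I would take $u_\epsilon:=(u^\star-\epsilon)_+\in\calU$ and let $\epsilon\downarrow0$. Then $e^{-\int_0^a u_\epsilon}\uparrow e^{-\int_0^a u^\star}$, so $\Psi_{u_\epsilon}\downarrow\Psi_{u^\star}$ by dominated convergence wherever $\Psi_{u_\epsilon}$ is finite. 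Roots are continuous under monotone convergence of strictly increasing functions, so $\lambda_0(u_\epsilon)\to\lambda_0(u^\star)$.

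The supremum is not attained. Any $u\in\calU$ has $\sup(p_0\mu+u)<\Lbar$, so $u\le u^\star-\delta$ everywhere for some $\delta>0$. Part (i), strict case, then gives $\lambda_0(u)<\lambda_0(u^\star)$.

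\textbf{Part (iii): the deficit $\nu^\star$.} Substituting $u^\star$ gives $e^{-\int_0^a u^\star}=e^{-\Lbar a}\bar G(a)^{-p_0}$. Hence $\Psi_{u^\star}(\lambda)$ equals the left side of \eqref{eq:nustar} evaluated at $\nu=\lambda-\Lbar$, and so $\lambda_0(u^\star)=\Lbar+\nu^\star$.

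By the computation in the first step, the left side of \eqref{eq:nustar} equals $m/(1-p_0)$ at $\nu=0$, and it is increasing in $\nu$. Therefore $\nu^\star\le0$, with equality exactly when $m=1-p_0$. This happens iff $\sum_{n\ge2}(n-1)p_n=0$, that is, iff $\mathbb P(\xi\ge2)=0$.

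\textbf{The closing chain.} For $u\in\calU_{\mathrm c}$, Theorem~\ref{thm:envelope} gives $u\in\calU$, and (ii) then gives the strict inequality $\lambda_0(u)<\Lbar+\nu^\star$.

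\textbf{Main obstacle.} I expect the hardest step to be the finiteness and continuity of $\Psi_u$ near the edge of its domain. In particular, $\Psi_{u^\star}$ must be finite in a neighbourhood of its root, and convergence of the roots as $\epsilon\to0$ must be justified even though $\Psi_{u_\epsilon}$ may blow up at different abscissae. I would first try to show that all these functions are finite on $(-\infty,\Lbar+\nu^\star+\eta)$ for some uniform $\eta>0$. The tool is the bound $e^{-\int u_\epsilon}\le e^{-\int u^\star}e^{\epsilon a}$ combined with the explicit representation of $\Psi_{u^\star}$ through $\bar G^{1-p_0}$.
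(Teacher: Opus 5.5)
Your proposal is correct and follows essentially the paper's own argument. It uses the same bounds $\Psi_u(0)\le m<1$ and $\Psi_u(\Lbar)\ge m/(1-p_0)\ge1$ obtained from $e^{-\int_0^a u}\ge e^{-\Lbar a}\bar G(a)^{-p_0}$, the same pointwise comparison $\Psi_{u_2}\le\Psi_{u_1}$ for monotonicity, the same substitution of $u^\star$ to reach \eqref{eq:nustar}, and the same sign argument via $m-(1-p_0)=\sum_{n\ge2}(n-1)p_n$. The only difference is sharpness, where the paper is more direct: for $0<\varepsilon<\Lbar-p_0\mu^\ast$ your bound $e^{-\int_0^a u_\varepsilon}\le e^{-\int_0^a u^\star}e^{\varepsilon a}$ is an equality, so $\Psi_{u^\star-\varepsilon}(\lambda)=\Psi_{u^\star}(\lambda+\varepsilon)$ and $\lambda_0(u^\star-\varepsilon)=\lambda_0(u^\star)-\varepsilon$ exactly, which replaces your continuity-of-roots argument; and the finiteness ``obstacle'' you anticipate disappears because $p_0\ge1-m>0$ forces $\Lbar<\mu_\infty$, so every $\Psi_u\le\Psi_0$ is finite on a neighbourhood of $[0,\Lbar]$ by Fact~\ref{fact:malthus}(i).
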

\begin{proof}
\rev{Let $u\ge0$ be measurable with $u\le u^\star$. Since
$m\ge1-p_0$, assumption~(H2) implies $p_0>0$, and therefore
$\Lbar=(1-p_0)\mu_\infty<\mu_\infty$ by~(H4). Moreover,
\[
\Psi_u(\lambda)
\le
\Psi_0(\lambda),
\]
so Fact~\ref{fact:malthus}(i) ensures that $\Psi_u$ is finite
and continuous on $[0,\Lbar]$. 
It is also strictly increasing:
\ed{for $\lambda<\lambda'$ one has $e^{\lambda a}<e^{\lambda'a}$ at every $a>0$,
and by~(H1) the measure $m\mu\bar Ge^{-\int_0^\cdot u}\,da$ charges
$(0,\infty)$.}
Furthermore,
\begin{align*}
\Psi_u(0)
&=
m\int_0^\infty
\mu(a)\bar G(a)e^{-\int_0^au}\,da \\
&\quad\le
m\int_0^\infty
\underbrace{\mu(a)\bar G(a)}_{g(a)}\,da
=
m<1.
\end{align*}
On the other hand, integrating $u\le u^\star:= \Lbar-p_0\mu(a)$ yields
\[
e^{-\int_0^au}
\ge
e^{-a\Lbar}\bar G(a)^{-p_0},
\]
because $\displaystyle\int_0^a\mu(s)\,ds = -\log\bar G(a)$. 
Hence
\[
\Psi_u(\Lbar)
\ge
m\int_0^\infty
\mu(a)\bar G(a)^{1-p_0}\,da
=
\frac{m}{1-p_0}
\ge1.
\]
The intermediate value theorem therefore yields a unique root
$\lambda_0(u)\in(0,\Lbar]$.}
\\
(i) \rev{Let $u_1,u_2\ge0$ satisfy
$\|\kappa_{u_i}\|_\infty\le\Lbar$ $(i=1,2)$
and
$u_1\le u_2$. Then
\[
e^{-\int_0^a u_1}
\ge
e^{-\int_0^a u_2},
\]
with strict inequality whenever $u_1<u_2$ on a set of positive
Lebesgue measure. Hence,
\[
\Psi_{u_1}(\lambda)\ge\Psi_{u_2}(\lambda)
\qquad, \forall \lambda\in[0,\Lbar],
\]
with strict inequality under the same condition. Evaluating at
$\lambda=\lambda_0(u_1)$ gives
\[
\Psi_{u_2}\!\bigl(\lambda_0(u_1)\bigr)
\le
\Psi_{u_1}\!\bigl(\lambda_0(u_1)\bigr)
= 1 \equiv \Psi_{u_2}\!\bigl(\lambda_0(u_2)\bigr).
\]
Since $\Psi_{u_2}$ is strictly increasing,
\[
\lambda_0(u_1)\le\lambda_0(u_2),
\]
with strict inequality whenever $u_1<u_2$ on a set of positive
Lebesgue measure.}\\
(ii) \rev{Every $u\in\calU$ satisfies
\[
u(a)<u^\star(a):=\Lbar-p_0\mu(a),
\]
where $u^\star(a)>0$ by~{\rm(H5)}. Since $\|\kappa_{u^\star}\|_\infty=\Lbar$, we have $u^\star\notin\calU$. Moreover, part~(i) yields $\lambda_0(u)\le\lambda_0(u^\star)$,
and therefore
\[
\sup_{u\in\calU}\lambda_0(u)
\le
\lambda_0(u^\star).
\]
To show that this upper bound is sharp, let
\[
u^\star_\varepsilon:=u^\star-\varepsilon,
\qquad
\edm{0<\varepsilon<\Lbar-p_0\mu^\ast=\inf_{a\ge0}u^\star(a).}
\]
\ed{Then $u^\star_\varepsilon\ge0$.}
Since $\|\kappa_{u^\star_\varepsilon}\|_\infty =\Lbar-\varepsilon<\Lbar$, we have $u^\star_\varepsilon\in\calU$. Furthermore,
\[
e^{-\int_0^au^\star_\varepsilon}
=
e^{-\int_0^au^\star}e^{\varepsilon a},
\]
and hence
\[
\Psi_{u^\star_\varepsilon}(\lambda)
=
\Psi_{u^\star}(\lambda+\varepsilon).
\]
Therefore,
\[
\Psi_{u^\star}
\!\left(
\lambda_0(u^\star_\varepsilon)+\varepsilon
\right)
=
1.
\]
Since $\Psi_{u^\star}$ is strictly increasing and admits the unique root
$\lambda_0(u^\star)$, it follows that
\[
\lambda_0(u^\star_\varepsilon)
=
\lambda_0(u^\star)-\varepsilon.
\]
Letting $\varepsilon\downarrow0$ gives
\[
\sup_{u\in\calU}\lambda_0(u)
=
\lambda_0(u^\star),
\]
and the supremum is not attained since $u^\star\notin\calU$.} \\
(iii) \rev{By definition, $\lambda_0(u^\star)$ is the unique solution of
\[
m\int_0^\infty
e^{\lambda a}\mu(a)\bar G(a)
e^{-\int_0^au^\star}\,da
=1.
\]
Using the expression of $u^\star$ and the computations already established
in the proof of the preamble, this equation reduces to
\[
m\int_0^\infty
e^{(\lambda_0(u^\star)-\Lbar)a}
\mu(a)\bar G(a)^{1-p_0}\,da
=1.
\]
Define $\nu^\star:=\lambda_0(u^\star)-\Lbar$, then $\nu^\star$ is the unique solution of~\eqref{eq:nustar}, and hence
$\lambda_0(u^\star)=\Lbar+\nu^\star$.

To determine the sign of $\nu^\star$, we evaluate $\Psi_{u^\star}$ at the
barrier value $\lambda=\Lbar$:
\[
\Psi_{u^\star}(\Lbar)
=
\frac{m}{1-p_0}
=
\E[\xi\mid\xi\ge1]
\ge1,
\]
where the first equality follows from the normalization of the probability density $(1-p_0)\mu(a)\bar G(a)^{1-p_0}$. Since
$\Psi_{u^\star}$ is strictly increasing, then $\lambda_0(u^\star)\le\Lbar$,
that is, $\nu^\star\le0$. Finally,
\[
\nu^\star=0
\iff
\Psi_{u^\star}(\Lbar)=1
\iff
\frac{m}{1-p_0}=1.
\]
Since
\[
m-(1-p_0)
=
\edm{\sum_{n\ge2}(n-1)p_n,}
\]
the latter equality holds if and only if $p_k=0$ for all $k\ge2$, namely $\mathbb P(\xi\ge2)=0$.}
\end{proof}

\begin{remark}[\rev{What is, and what is not, implied by invariance}]
\label{rem:nontrivial}
\rev{Lemma~\ref{lem:invariance} immediately yields the qualitative barrier
$\lambda_0(u)<\Lbar$, so this part is indeed built into the modelling
hypothesis. The contribution of Theorem~\ref{thm:sup} is different: it
identifies the exact supremum $\sup_{u\in\calU}\lambda_0(u)=\Lbar+\nu^\star$,
where the deficit $|\nu^\star|$ is explicitly determined by the offspring law. Three ingredients are needed to obtain this result, and none follows from the invariance property alone. First, the optimisation over the infinite-dimensional
class $\calU$ is reduced, through the monotonicity established in Theorem~\ref{thm:sup}(i), to the single extremal envelope
$u^\star(a)=\Lbar-p_0\mu(a)$. Second, substituting
$\int_0^au^\star=a\Lbar+p_0\log\bar G(a)$ transforms the controlled
Euler--Lotka equation into that of a different Bellman--Harris process, with
hazard $(1-p_0)\mu$ and offspring law $\xi\mid\xi\ge1$. This transformation is
what turns the qualitative barrier $\Lbar$ into the computable quantity
$\Lbar+\nu^\star$. Finally, the sign of $\nu^\star$ follows from
$\E[\xi\mid\xi\ge1]\ge1$, with equality if and only if
$\mathbb P(\xi\ge2)=0$, showing that the gap is entirely determined by the
offspring distribution. For the calibrated model of Section~\ref{sec:num}, the invariance argument
gives $\lambda_0<0.1400$, whereas Theorem~\ref{thm:sup} sharpens this to
$\lambda_0\le0.1050$, reducing the admissible range by approximately $25\%$.}
\end{remark}
\rev{The next corollary compares the two families that are used in practice.}
\begin{corollary}[\rev{Constant versus proportional actuation}]
\label{cor:shape}
\ed{Assume {\rm(H1)--(H6)}.}
\rev{Write $\lambda_0^{\mathrm p}(c):=\lambda_0(c\mu)$ and
$\lambda_0^{\mathrm{cst}}(c_0):=\lambda_0(c_0\mathbf 1)$ for $c,c_0\ge0$. Then
$c\mu\in\calU$ if and only if $(p_0+c)\mu^\ast<\Lbar$, and
$c_0\mathbf 1\in\calU$ if and only if $c_0<\Lbar-\|\kappa\|_\infty$. On the
constant family the output is an exact shift of the uncontrolled decay rate,}
\begin{equation}
\lambda_0^{\mathrm{cst}}(c_0)=\lambda_0(0)+c_0 ,
\label{eq:shift}
\end{equation}
\rev{and the two families are ordered:}
\begin{equation}
\sup_{c\mu\in\calU}\lambda_0^{\mathrm p}(c)
\ \le\
\sup_{c_0\mathbf 1\in\calU}\lambda_0^{\mathrm{cst}}(c_0)
\ =\ \Lbar-\bigl(\|\kappa\|_\infty-\lambda_0(0)\bigr),
\label{eq:order}
\end{equation}
\rev{with strict inequality whenever $\mu<\mu^\ast$ on a set of positive
measure.}
\end{corollary}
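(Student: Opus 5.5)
The plan is to split the corollary into three parts: the two membership characterisations, the shift identity \eqref{eq:shift}, and the ordering \eqref{eq:order}. All three reduce to explicit computations with $\Psi_u$ and the monotonicity of Theorem~\ref{thm:sup}.

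\emph{Membership.} For $u=c\mu$ the killing rate is $\kappa_u=(p_0+c)\mu$ on singletons. Hence $\|\kappa_u\|_\infty=(p_0+c)\mu^\ast$ by Lemma~\ref{lem:invariance}(i), and the definition \eqref{eq:envclass} gives the first equivalence directly. For $u=c_0\mathbf 1$, the supremum over $a$ of $p_0\mu(a)+c_0$ equals $p_0\mu^\ast+c_0=\|\kappa\|_\infty+c_0$, by \eqref{eq:kappa}. So membership in $\calU$ holds if and only if $c_0<\Lbar-\|\kappa\|_\infty$. This interval is non-empty by (H5).

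\emph{Shift identity.} I will reuse the argument of Theorem~\ref{thm:sup}(ii). Since $e^{-\int_0^a c_0}=e^{-c_0a}$, we have $\Psi_{c_0\mathbf 1}(\lambda)=\Psi_0(\lambda-c_0)$. The root of $\Psi_{c_0\mathbf 1}=1$ is therefore $\lambda_0(0)+c_0$, by uniqueness of the root of the strictly increasing $\Psi_0$. Some care is needed about the domain: the identity must hold with both sides finite. For $\lambda\le\Lbar$ this is ensured by Fact~\ref{fact:malthus}(i), because $\lambda-c_0<\Lbar<\mu_\infty$. The map $c_0\mapsto\lambda_0(0)+c_0$ is increasing on the admissible interval $[0,\Lbar-\|\kappa\|_\infty)$. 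Its supremum is $\lambda_0(0)+\Lbar-\|\kappa\|_\infty$, which is the equality in \eqref{eq:order}.

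\emph{Ordering.} The idea is to dominate each admissible proportional input by an admissible constant input. If $c\mu\in\calU$, then $c\mu\le c\mu^\ast$ pointwise. Moreover $c_0:=c\mu^\ast$ satisfies $c_0=(p_0+c)\mu^\ast-p_0\mu^\ast<\Lbar-\|\kappa\|_\infty$, so $c_0\mathbf 1\in\calU$. Theorem~\ref{thm:sup}(i) then gives $\lambda_0^{\mathrm p}(c)\le\lambda_0^{\mathrm{cst}}(c\mu^\ast)$, and taking suprema yields the inequality in \eqref{eq:order}.

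\emph{Strictness.} This is the delicate step, because a pointwise strict inequality for each $c$ does not survive passage to the supremum. I would instead work at the boundary. On the proportional side, monotonicity in $c$ (Theorem~\ref{thm:sup}(i)) together with continuity of the root in $c$ (dominated convergence in $\Psi_{c\mu}$, combined with strict monotonicity in $\lambda$) gives
\[
\sup_{c\mu\in\calU}\lambda_0^{\mathrm p}(c)=\lambda_0(\bar c\mu),\qquad \bar c:=\Lbar/\mu^\ast-p_0 .
\]
The same continuity argument applies to the limiting input $\bar c\mu$, which still satisfies $\|\kappa_{\bar c\mu}\|_\infty\le\Lbar$. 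Now $\bar c\mu\le\bar c\mu^\ast$ pointwise, with strict inequality on the set where $\mu<\mu^\ast$. When that set has positive measure and $\bar c>0$, the strict clause of Theorem~\ref{thm:sup}(i) gives
\[
\lambda_0(\bar c\mu)<\lambda_0(\bar c\mu^\ast\mathbf 1)=\lambda_0(0)+\Lbar-\|\kappa\|_\infty .
\]
Here $\bar c>0$ holds by (H5). The main obstacle is exactly this continuity of $c\mapsto\lambda_0(c\mu)$ up to the closed boundary. I expect to handle it through the monotone convergence of $\bar G^{\,c}$ in $c$ and the finiteness of $\Psi_0$ on $[0,\Lbar]$.
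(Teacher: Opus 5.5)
Your proof is correct. It follows the paper's three-step structure: norm computations for membership, the shift $\Psi_{c_0\mathbf 1}(\lambda)=\Psi_0(\lambda-c_0)$, and domination of $c\mu$ by a constant input combined with Theorem~\ref{thm:sup}(i). The differences are in the last step.

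\textbf{Ordering.} The paper dominates $c\mu$ by the boundary constant $\Lbar-p_0\mu^\ast$ and then approximates that constant from inside $\calU$. You use the admissible constant $c\mu^\ast$ directly, which is slightly cleaner.

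\textbf{Strictness.} The paper only says that the strict part of Theorem~\ref{thm:sup}(i) applies to $c\mu$ versus the constant. Read literally, this gives $\lambda_0^{\mathrm p}(c)<\sup\lambda_0^{\mathrm{cst}}$ for each admissible $c$. As you point out, that alone does not give a strict inequality between the suprema. Your comparison at the boundary gain $\bar c=c_{\max}$ supplies the uniform gap the paper leaves implicit. There $\bar c\mu$ and $\bar c\mu^\ast\mathbf 1$ both have $\|\kappa_u\|_\infty=\Lbar$, so they fall within the scope of Theorem~\ref{thm:sup}(i).

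\textbf{The continuity step is not needed.} The step you flag as the main obstacle can be skipped. Monotonicity alone, i.e.\ Theorem~\ref{thm:sup}(i) applied to $c\mu\le\bar c\mu$, gives the one-sided bound $\sup_{c<\bar c}\lambda_0^{\mathrm p}(c)\le\lambda_0(\bar c\mu)$. Combined with $\lambda_0(\bar c\mu)<\lambda_0(\bar c\mu^\ast\mathbf 1)=\lambda_0(0)+\Lbar-\|\kappa\|_\infty$, this already yields the strict inequality. The equality $\sup=\lambda_0(\bar c\mu)$ is true, by dominated convergence with majorant $e^{\Lbar t}g$, but it is superfluous.

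\textbf{A small missing justification.} Your domain remark for the shift uses $\Lbar<\mu_\infty$. This needs $p_0>0$, which follows from (H2) because $m\ge1-p_0$.
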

\begin{proof}
\rev{The proof is split into three steps.

\pl{\emph{Admissibility criteria.}} $\|\kappa_{c\mu}\|_\infty=\sup_a(p_0+c)\mu(a)
=(p_0+c)\mu^\ast$ and $\|\kappa_{c_0\mathbf 1}\|_\infty=\|\kappa\|_\infty+c_0$,
so both criteria are instances of $\|\kappa_u\|_\infty<\Lbar$.}

\rev{\pl{\emph{Proof of~\eqref{eq:shift}.}} Let $c_0\ge0$. Since $\int_0^ac_0\,ds=c_0a$, the
definition~\eqref{eq:ELu} gives, for every $\lambda$,
\begin{align*}
\Psi_{c_0\mathbf 1}(\lambda)
&=m\int_0^\infty e^{\lambda a}\mu(a)\bar G(a)\,e^{-c_0a}\,da \\
&=m\int_0^\infty e^{(\lambda-c_0)a}\mu(a)\bar G(a)\,da
=\Psi_0(\lambda-c_0).
\end{align*}
}
\rev{Hence $\Psi_{c_0\mathbf 1}(\lambda)=1$ if and only if
$\Psi_0(\lambda-c_0)=1$, that is, if and only if $\lambda-c_0=\lambda_0(0)$ by
uniqueness of the root of $\Psi_0$. This proves~\eqref{eq:shift}.} \\
\rev{Consequently $c_0\mapsto\lambda_0^{\mathrm{cst}}(c_0)$ is increasing and
its supremum over the admissible range $c_0<\Lbar-\|\kappa\|_\infty$ is}
\[
\sup_{c_0\mathbf 1\in\calU}\lambda_0^{\mathrm{cst}}(c_0)
=\lambda_0(0)+\Lbar-\|\kappa\|_\infty
=\Lbar-\bigl(\|\kappa\|_\infty-\lambda_0(0)\bigr),
\]
\rev{which is the value stated in~\eqref{eq:order}. It lies strictly below
$\Lbar$ whenever $\mathbb P(\xi\ge2)>0$, since
$\lambda_0(0)<\|\kappa\|_\infty$ by~\eqref{eq:lam0kappa}. The supremum is not
attained, the admissibility constraint being strict.}

\rev{\pl{\emph{Proof of~\eqref{eq:order}.}} If $c\mu\in\calU$ then, for every $a$,
$c\mu(a)\le c\mu^\ast<\Lbar-p_0\mu^\ast\le\Lbar-p_0\mu(a)$, so
$c\mu\le\Lbar-p_0\mu^\ast$ pointwise, i.e.\ $c\mu\le c_0\mathbf 1$ for
$c_0:=\Lbar-p_0\mu^\ast=\sup\{c_0:c_0\mathbf 1\in\calU\}$. Theorem~\ref{thm:sup}(i)
gives $\lambda_0^{\mathrm p}(c)\le\lambda_0^{\mathrm{cst}}(c_0')$ for every
$c_0'<c_0$ close enough to $c_0$, \pl{which gives}~\eqref{eq:order}. If $\mu<\mu^\ast$ on
a set of positive measure, then $c\mu<c_0\mathbf 1$ there, and the strict part
of Theorem~\ref{thm:sup}(i) applies.}
\end{proof}

\rev{Both families \pl{use up} the same certifiable authority, in the sense that
$\|\kappa_u\|_\infty\uparrow\Lbar$ along each of them, yet neither approaches
the supremum $\Lbar+\nu^\star$ of Theorem~\ref{thm:sup}. The constant family
falls short of it by $\|\kappa\|_\infty-\lambda_0(0)+\nu^\star>0$, and the
proportional family by more; only the envelope-saturating shape $u^\star$
approaches it. What separates the three is the shape of the actuator, not the
certifiable authority it consumes.}



\section{\rev{The proportional actuator: certifiable range, sensitivity, and cost}}
\label{sec:prop}
Theorem~\ref{thm:envelope} bounds every admissible input. \rev{For the
hazard-proportional actuator $u=c\mu$ we now show that this bound is also
sufficient, the controlled process remaining Bellman--Harris so that the
sufficiency result of~\cite{companion} applies unchanged. This yields an
explicit certifiable gain interval, its sensitivity, and the degradation of the
certificate near the admissibility boundary.} \rev{Throughout this section and
Section~\ref{sec:num} we abbreviate
$\lambda_0(c):=\lambda_0^{\mathrm p}(c)=\lambda_0(c\mu)$.}

\subsection{\rev{The certifiable gain interval}}
\label{sec:gain}
Here $u=c\mu$ gives total hazard $(1+c)\mu$, survival
$\bar G_c=\bar G^{\,1+c}$, and an offspring law keeping the natural offspring
with probability $1/(1+c)$:
\begin{equation}
\begin{aligned}
&m_c=\tfrac{m}{1+c},\quad p_{0,c}=\tfrac{p_0+c}{1+c},\\
&\mu^\ast_c=(1+c)\mu^\ast,\quad \mu_{\infty,c}=(1+c)\mu_\infty .
\end{aligned}
\label{eq:eff}
\end{equation}
Hence
$\|\kappa_c\|_\infty=(p_0+c)\mu^\ast$ increases with $c$, whereas
$(1-p_{0,c})\mu_{\infty,c}=\Lbar$ remains constant, reflecting the cancellation
of $(1+c)$ in Lemma~\ref{lem:invariance}.
\rev{The admissibility criterion of Corollary~\ref{cor:shape} for this family,
$(p_0+c)\mu^\ast<\Lbar$, therefore reads $c<c_{\max}$ with}
\begin{equation}
c_{\max}:=\frac{\Lbar-p_0\mu^\ast}{\mu^\ast},
\label{eq:cmax}
\end{equation}
\rev{which equals $1-2p_0$ when $\mu^\ast=\mu_\infty$. The next theorem shows
that this necessary condition is also sufficient.}
\begin{theorem}[Certifiable gain interval]
\label{thm:cmax}
\ed{Assume the uncontrolled process satisfies (H1)--(H6). Then}
under~\eqref{eq:eff}, the controlled process satisfies (H1)--(H5) and (H6)
if and only if \rev{$c\in[0,c_{\max})$, with $c_{\max}$ as
in~\eqref{eq:cmax}}. Hence
\begin{itemize}
    \item[(a)] for every $c<c_{\max}$, all conclusions of
Fact~\ref{fact:selection} hold: the controlled process admits a Yaglom limit
$\nu_Y(c)$, globally attractive and estimable at rate $N^{-\varpi}$; and
\item[(b)] for $c\ge c_{\max}$, \ed{no admissible certificate exists, so no
quasi-stationary equilibrium can be certified}.
\end{itemize}
\end{theorem}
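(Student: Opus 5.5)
The plan is to verify hypotheses (H1)--(H6) one by one for the effective Bellman--Harris parameters \eqref{eq:eff}. Then (a) follows by invoking Fact~\ref{fact:selection} for the controlled process, and (b) by invoking Theorem~\ref{thm:envelope}.

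First I check that $u=c\mu$ really produces a Bellman--Harris process with the parameters \eqref{eq:eff}. Each individual now dies at total hazard $(1+c)\mu$, so its survival function is $\bar G^{1+c}$. At death, the cause is natural with probability $1/(1+c)$, in which case the offspring number is $\xi$; otherwise the individual is removed and leaves $0$ offspring. The effective offspring law is therefore $p_{n,c}=p_n/(1+c)$ for $n\ge1$ and $p_{0,c}=(p_0+c)/(1+c)$. These give $m_c=m/(1+c)$ and the hazard quantities stated in \eqref{eq:eff}.

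Next I verify the hypotheses for $c\ge0$.
\begin{itemize}
\item[(H1)] The density $g_c=(1+c)\mu\bar G^{1+c}=(1+c)g\bar G^{c}$ is continuous and has full support. It is bounded below on compacts because $\bar G>0$ there.
\item[(H2)] We have $m_c\le m<1$.
\item[(H3)] We have $p_{1,c}=p_1/(1+c)>0$.
\item[(H4)] This holds after scaling by $1+c$.
\item[(H5)] The barrier is $\Lbar_c=(1-p_{0,c})\mu_{\infty,c}=\Lbar$, and the killing bound is $p_{0,c}\mu^\ast_c=(p_0+c)\mu^\ast$. So (H5) for the controlled process is equivalent to $(p_0+c)\mu^\ast<\Lbar$, i.e.\ $c<c_{\max}$.
\item[(H6)] The moment condition $\sum n^\alpha p_{n,c}<\infty$ is inherited from the uncontrolled law. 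By Theorem~\ref{thm:sup}(i), $\lambda_0(c)\ge\lambda_0(0)$. Hence $\alpha\lambda_0(c)\ge\alpha\lambda_0(0)>\Lbar=\Lbar_c$, and the same exponent $\alpha$ works.
\end{itemize}
All hypotheses other than (H5) hold for every $c\ge0$. The ``if and only if'' is therefore exactly the (H5) equivalence.

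For (a), Fact~\ref{fact:selection} applied to the controlled Bellman--Harris process yields the Yaglom limit $\nu_Y(c)$, exponential contraction, and the estimation rate \eqref{eq:rate} with $\|\kappa_c\|_\infty=(p_0+c)\mu^\ast$. Fact~\ref{fact:necessity} also supplies an admissible certificate, so $c\mu\in\calU_{\mathrm c}$. Fact~\ref{fact:malthus}(ii) applied to the controlled process identifies its Malthusian parameter as the root of $m_c\int e^{\lambda a}dG_c=1$. This equation is \eqref{eq:ELu} with $u=c\mu$, so the output coincides with $\lambda_0(c\mu)$.

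For (b), if $c\ge c_{\max}$ then $\|\kappa_{c\mu}\|_\infty\ge\Lbar$. Any admissible certificate would give $\|\kappa_{c\mu}\|_\infty<\lambda_1\le\Lbar$ by Theorem~\ref{thm:envelope}, which is a contradiction. Hence $c\mu\notin\calU_{\mathrm c}$.

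The main obstacle is the (H6) step. I must check that $\lambda_0(c)$ is the controlled process's own Euler--Lotka root, which is needed both for (H6) and for identifying the Yaglom decay rate via \eqref{eq:selfcons}. I must also check that monotonicity from Theorem~\ref{thm:sup}(i) applies, which requires $\|\kappa_{c\mu}\|_\infty\le\Lbar$; this holds precisely for $c\le c_{\max}$. For $c>c_{\max}$, (H5) fails anyway, so the equivalence is not affected.
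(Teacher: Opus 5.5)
Your proposal is correct and follows essentially the same route as the paper. You reduce the equivalence to (H5) via $\Lbar_c=\Lbar$ and $\|\kappa_c\|_\infty=(p_0+c)\mu^\ast$, obtain (H6) by identifying the controlled Euler--Lotka root with $\lambda_0(c\mu)$ (the $(1+c)$ factors cancel) and applying Theorem~\ref{thm:sup}(i), then deduce (a) from Fact~\ref{fact:selection} and (b) from Theorem~\ref{thm:envelope}. The only slip is your claim that every hypothesis except (H5) holds for all $c\ge0$, since your (H6) argument covers only $c\le c_{\max}$; as you note yourself, this does not affect the equivalence because (H5) already fails beyond $c_{\max}$.
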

\begin{proof}
By~\eqref{eq:eff}, (H5) becomes
$\|\kappa_c\|_\infty=(p_0+c)\mu^\ast<\Lbar$, i.e.,
$c<c_{\max}$ by~\eqref{eq:cmax}. The remaining assumptions hold on
$[0,c_{\max})$:
(H1) $\bar G^{1+c}$ is absolutely continuous with full support and density
$(1+c)\mu\bar G^{1+c}$ bounded below on compact subsets of $(0,\infty)$;
(H2) $m_c\le m<1$;
(H3) $p_{1,c}=p_1/(1+c)>0$;
(H4) is immediate.
\rev{For (H6), let $\alpha$ be the exponent for $c=0$. The moment condition
is inherited since $p_{n,c}\le p_n$ for every $n\ge1$, so that
$\sum_nn^\alpha p_{n,c}\le\sum_nn^\alpha p_n<\infty$. The controlled barrier is
$\Lbar_c=(1-p_{0,c})\mu_{\infty,c}=\Lbar$ by~\eqref{eq:eff}. The controlled
Malthusian parameter solves $m_c\int e^{\lambda a}\,dG_c=1$; since
$dG_c=(1+c)\mu\bar G^{\,1+c}\,da$ and $m_c=m/(1+c)$, the factors $(1+c)$ cancel
and this is exactly $\Psi_{c\mu}(\lambda)=1$, whose unique root is
$\lambda_0(c)$. \pl{The controlled Malthusian parameter is therefore $\lambda_0(c)$ itself.} Monotonicity
(Theorem~\ref{thm:sup}(i)) applied to $0\le c\mu$ then gives
$\alpha\lambda_0(c)\ge\alpha\lambda_0(0)>\Lbar=\Lbar_c$, the strict inequality
being (H6) for $c=0$; the same $\alpha$ therefore serves for all gains.}
\rev{Part~(a) follows from Fact~\ref{fact:selection}, the controlled process
being Bellman--Harris with parameters~\eqref{eq:eff}. For~(b), let $c\ge c_{\max}$.
Then $\|\kappa_c\|_\infty=(p_0+c)\mu^\ast\ge\Lbar$ by~\eqref{eq:cmax}, whereas
any admissible certificate would give $\|\kappa_c\|_\infty<\lambda_1\le\Lbar$ by
Theorem~\ref{thm:envelope}. As a conclusion, no such certificate exists.}
\end{proof}

\begin{remark}[Certificate versus existence]
\label{rem:cert}
Theorem~\ref{thm:cmax}(b) concerns loss of the certificate, not of the
equilibrium. \rev{At the first-moment level the Malthusian parameter
$\lambda_0(c)$ is still defined by~\eqref{eq:ELu} well beyond $c_{\max}$: by
renewal theory, or by~\cite[Thm.~3.8]{bansaye2020}, and quantitatively up to
$\bar c_{\mathrm{lb}}$ of~\eqref{eq:sep} by Remark~\ref{prop:separation}. What
is lost beyond $c_{\max}$ are the contraction and estimation guarantees of
Fact~\ref{fact:selection}, which rest on the measure-valued lift. Whether the
Yaglom limit $\nu_Y(c)$ itself persists there remains open.}
\end{remark}

\subsection{\rev{Sensitivity and reachable decay rates}}
\label{sec:sens}
\begin{proposition}[Sensitivity and reachable set]
\label{prop:sens}
\ed{Assume {\rm(H1), (H2), (H4), (H5)}.} On $[0,c_{\max})$, the map
$c\mapsto\lambda_0(c)$ is $C^1$, with
\begin{equation}
\lambda_0'(c)=
\frac{\displaystyle\int_0^\infty e^{\lambda_0(c)t}g\,\bar G^{\,c}\,|\log\bar G|\,dt}
{\displaystyle\int_0^\infty t\,e^{\lambda_0(c)t}g\,\bar G^{\,c}\,dt}
\in(0,\mu^\ast],
\label{eq:dlam}
\end{equation}
so $\lambda_0$ is strictly increasing. \rev{Hence the reachable set within the
proportional family is $[\lambda_0(0),\lambda_0^{\max})$, where
$\lambda_0^{\max}:=\lim_{c\uparrow c_{\max}}\lambda_0(c)$ exists by monotonicity
and the bound $\lambda_0\le\Lbar$, and is not attained. Moreover}
\begin{equation}\label{lambda_0}
\lambda_0(c)\le\lambda_0(0)+\mu^\ast c.
\end{equation}
\end{proposition}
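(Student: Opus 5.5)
We prove the proposition with the implicit function theorem applied to the proportional Euler--Lotka equation. Set
\[
F(c,\lambda):=\Psi_{c\mu}(\lambda)=m\int_0^\infty e^{\lambda t}g(t)\,\bar G(t)^{c}\,dt .
\]
By Theorem~\ref{thm:sup}, $\lambda_0(c)$ is its unique root in $(0,\Lbar]$ for every $c\in[0,c_{\max})$, since $c\mu\in\calU$ there.

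\emph{Step 1: $F$ is $C^1$ near the graph.} Writing $\bar G^{\,c}=e^{-c|\log\bar G|}$, the formal partial derivatives are
\[
\partial_\lambda F=m\int t\,e^{\lambda t}g\bar G^{\,c}\,dt,\qquad
\partial_c F=-m\int e^{\lambda t}g\bar G^{\,c}|\log\bar G|\,dt .
\]
To differentiate under the integral, we need integrable dominating functions on a neighbourhood of $(c,\lambda_0(c))$.

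The key point is that the weight $\bar G^{\,c}$ with $c>0$ supplies extra exponential decay. By Fact~\ref{fact:malthus}(i), for any $\theta<\mu_\infty$ we have $\bar G(t)\le C_\theta e^{-\theta t}$. Hence $e^{\lambda t}g\bar G^{\,c}$ decays like $e^{(\lambda-(1+c)\theta)t}$, using $g=\mu\bar G$ with $\mu\le\mu^\ast$. At $\lambda\le\Lbar<\mu_\infty$ this leaves an exponential margin, which absorbs the polynomial factors $t$ and $|\log\bar G|\le\mu^\ast t$.

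This is the delicate step, especially at $c=0$ near the root, where the margin is $\mu_\infty-\lambda_0(0)>0$. I expect it to be the main obstacle. The dominating bound must be uniform on a small box $|\lambda-\lambda_0(c)|<\delta$, and $\delta$ is taken below half the margin.

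Since $\partial_\lambda F>0$ (by (H1), $g>0$ on a set of positive measure), the implicit function theorem gives $\lambda_0\in C^1$ with $\lambda_0'=-\partial_cF/\partial_\lambda F$. The factors $m$ cancel, which is exactly~\eqref{eq:dlam}.

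\emph{Step 2: bounds on $\lambda_0'$.} Positivity is clear, because $|\log\bar G(t)|>0$ for $t>0$ and both integrands are positive on sets of positive measure. For the upper bound, use
\[
|\log\bar G(t)|=\int_0^t\mu\le\mu^\ast t .
\]
The numerator is then at most $\mu^\ast$ times the denominator, so $\lambda_0'(c)\in(0,\mu^\ast]$.

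\emph{Step 3: consequences.} Strict monotonicity follows from $\lambda_0'>0$; it also agrees with Theorem~\ref{thm:sup}(i). The mean value theorem combined with $\lambda_0'\le\mu^\ast$ gives~\eqref{lambda_0}. The limit $\lambda_0^{\max}$ exists because $\lambda_0$ is increasing and bounded by $\Lbar$. Continuity and strict increase then map $[0,c_{\max})$ onto $[\lambda_0(0),\lambda_0^{\max})$, and the supremum is not attained since every value $\lambda_0(c)$ is strictly below the limit.
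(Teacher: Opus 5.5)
Your proof is correct and follows the paper's route. Both apply the implicit function theorem to $\Psi(c,\lambda)=m\int e^{\lambda t}g\bar G^{\,c}\,dt$, justifying differentiation under the integral by Fact~\ref{fact:malthus}(i) together with the margin $\Lbar<\mu_\infty$, then use $|\log\bar G|\le\mu^\ast t$ for the upper bound and integrate to get~\eqref{lambda_0}.

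The domination step you flag as the main obstacle is routine: the paper simply dominates by $(1+t+\mu^\ast t)e^{\Lbar t}g(t)\le C_\theta e^{\theta t}g(t)$ with $\theta\in(\Lbar,\mu_\infty)$, using $\bar G^{\,c}\le1$. Your justification of non-attainment, via strict monotonicity, is actually cleaner than the paper's appeal to $c_{\max}\mu\notin\calU$.
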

\begin{proof}
\rev{
Let
\[
\Psi(c,\lambda)
:=
m\int_0^\infty
e^{\lambda t}g(t)\bar G(t)^c\,dt .
\]
Fix $0\le c_1<c_2<c_{\max}$. Since $\lambda_0(c)\in(0,\Lbar]$ for every
$c$, it suffices to work on $[c_1,c_2]\times[0,\Lbar]$, a fixed compact set.
There the integrand and its partial derivatives in $c$ and $\lambda$ are
bounded in modulus by $\edm{(1+t+\mu^\ast t)}\,e^{\Lbar t}g(t)\bar G(t)^{\,c_1}$,
using $\bar G^{\,c}\le\bar G^{\,c_1}$ and
$|\log\bar G(t)|=\int_0^t\mu\le\mu^\ast t$; this majorant is integrable because $\edm{(1+t+\mu^\ast t)}\,e^{\Lbar t}\le C_\theta e^{\theta t}$
for any $\theta\in(\Lbar,\mu_\infty)$, and $\int e^{\theta t}\,dG<\infty$ by
Fact~\ref{fact:malthus}(i). Hence differentiation under the integral sign is justified, so that $\Psi$ is $C^1$.
\[
\partial_\lambda\Psi(c,\lambda)
=
m\int_0^\infty
t\,e^{\lambda t}g(t)\bar G(t)^c\,dt
>0,
\]
and
\[
\partial_c\Psi(c,\lambda)
=
-m\int_0^\infty
e^{\lambda t}g(t)\bar G(t)^c
|\log\bar G(t)|\,dt
<0.
\]
Since $\lambda_0(c)$ is characterized by
$\Psi(c,\lambda_0(c))=1$, the implicit function theorem yields that
$c\mapsto\lambda_0(c)$ is of class $C^1$, with
\[
\lambda_0'(c)
=
-
\frac{\partial_c\Psi(c,\lambda_0(c))}
{\partial_\lambda\Psi(c,\lambda_0(c))},
\]
which is exactly~\eqref{eq:dlam}. Because
$\partial_\lambda\Psi>0$ and $\partial_c\Psi<0$, one has
$\lambda_0'(c)>0$, so $\lambda_0$ is strictly increasing.

The existence of
\[
\lambda_0^{\max}
=
\lim_{c\uparrow c_{\max}}\lambda_0(c)
\]
follows from monotonicity together with the uniform bound
$\lambda_0(c)\le\Lbar$ (Theorem~\ref{thm:sup}). Since $\edm{c_{\max}\mu\notin\calU}$,
this limit is not attained.

Finally,
\[
|\log\bar G(t)|
=
\int_0^t\mu(s)\,ds
\le
\mu^\ast t,
\]
\pl{so that}
\[
\lambda_0'(c)
\le
\mu^\ast.
\]
Integrating this differential inequality from $0$ to $c$ gives
\[
\lambda_0(c)-\lambda_0(0)
=
\int_0^c\lambda_0'(s)\,ds
\le
\mu^\ast c,
\]
which proves~\eqref{lambda_0}.
}
\end{proof}

\begin{remark}[\rev{Strict separation}]
\label{prop:separation}
\rev{Theorem~\ref{thm:sup}(iii) already separates the reachable set from
$\Lbar$; within this family the separation can also be read off the gain axis.
Only the intermediate-value step of Fact~\ref{fact:malthus}(ii) uses (H5), and
for the controlled process it requires
\[
\Lbar>(1-m_c)\mu_c^\ast=(1+c-m)\mu^\ast
\]
by~\eqref{eq:eff}, that is,
$c<\bar c_{\mathrm{lb}}$ with}
\begin{equation}
\rev{\bar c_{\mathrm{lb}}:=\frac{\Lbar}{\mu^\ast}-(1-m)
=c_{\max}+\edm{\sum_{n\ge2}}(n-1)p_n\ \ge\ c_{\max},}
\label{eq:sep}
\end{equation}
\rev{since 
\[
m-(1-p_0)=\edm{\sum_{n\ge2}}(n-1)p_n,
\]
with equality if and only if $\xi\in\{0,1\}$ almost surely (the other requirement,
$\Lbar<\mu_{\infty,c}$, holds because $1-p_0<1\le1+c$). Thus the ceiling
$\lambda_0(c)<\Lbar$ survives on
$[0,\bar c_{\mathrm{lb}})\supseteq[0,c_{\max})$: the certificate is lost
strictly before any gain in this family could drive $\lambda_0$ to the
transmission barrier.}
\end{remark}

\subsection{\pl{The cost of certifiability}}
\label{sec:cost}
Reference~\cite{companion} removes the particle-number threshold
of~\cite[Thm.~2.3]{cv1}, so~\eqref{eq:rate} holds for every $N\ge2$. What
degrades, unboundedly, is the guaranteed constant.

\rev{We first introduce and fix some constants. \ed{For a gain $c$, let $(V,\lambda_1(c),C_c)$ be an admissible certificate for $X_{c\mu}$ in the sense of Definition~\ref{def:cert};} the associated uniform moment bound of Fact~\ref{fact:selection} is $B_c:=C_c/(\lambda_1(c)-\|\kappa_c\|_\infty)$. We write $\gamma(c)$ for the contraction rate and $\varpi_c:=\gamma(c)/[2(\|\kappa_c\|_\infty+\gamma(c))]\in(0,\tfrac12)$ for the exponent of~\eqref{eq:rate} at gain $c$, and $d_0,C_0$ for the two constants of~\cite[\S4.3]{cv1}, which depend only on the particle system and not on $c$.}

\begin{proposition}[Divergence of the certified guarantee]
\label{prop:cost}
Let $c\in[0,c_{\max})$ and let $\lambda_1(c)$ be the drift rate of any
admissible Lyapunov function for the controlled process. Then
\begin{equation}
\lambda_1(c)-\|\kappa_c\|_\infty\ \le\ \mu^\ast\,(c_{\max}-c),
\label{eq:margin}
\end{equation}
so no admissible certificate can guarantee, through Fact~\ref{fact:selection},
a moment bound smaller than $C_c/[\mu^\ast(c_{\max}-c)]$. \rev{Assume in
addition that the certificate constant and the contraction rate do not
degenerate on the gain range: $\inf_{[0,c_{\max})}C_c>0$ and
$0<\gamma_-\le\gamma(c)\le\gamma_+<\infty$. Then the particle number needed for
a fixed accuracy $\epsilon$ obeys}
\begin{equation}
\rev{N_\epsilon(c)\ \ge\ K_\epsilon(c)\,
(c_{\max}-c)^{-2\|\kappa_c\|_\infty/\gamma(c)}
\ \xrightarrow[c\uparrow c_{\max}]{}\ \infty ,}
\label{eq:cost}
\end{equation}
\rev{where}
\begin{equation}
\rev{K_\epsilon(c):=\Bigl[\frac{\|\kappa_c\|_\infty+\gamma(c)}
{\epsilon\,\|\kappa_c\|_\infty}\,d_0^{\,2\varpi_c}
\Bigl(\frac{C_0C_c}{\mu^\ast}\Bigr)^{1-2\varpi_c}\Bigr]^{1/\varpi_c}.}
\label{eq:Keps}
\end{equation}
\rev{The prefactor $K_\epsilon(c)$ is bounded away from zero uniformly in $c$,
each factor in~\eqref{eq:Keps} being so, and the exponent
in~\eqref{eq:cost} lies in the fixed interval
$[2\|\kappa\|_\infty/\gamma_+,\,2\Lbar/\gamma_-]$; the divergence is therefore
carried by the exponent alone.}
\end{proposition}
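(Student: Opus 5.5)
The margin bound \eqref{eq:margin} comes straight from Theorem~\ref{thm:envelope}. Any admissible certificate for $X_{c\mu}$ satisfies $\lambda_1(c)\le\Lbar$ by \eqref{eq:env0}. The definition \eqref{eq:cmax} gives $\Lbar=(p_0+c_{\max})\mu^\ast$, and $\|\kappa_c\|_\infty=(p_0+c)\mu^\ast$ by \eqref{eq:eff}. Subtracting,
\[
\lambda_1(c)-\|\kappa_c\|_\infty\le (c_{\max}-c)\mu^\ast .
\]
The moment statement then follows by inverting: $B_c=C_c/(\lambda_1(c)-\|\kappa_c\|_\infty)\ge C_c/[\mu^\ast(c_{\max}-c)]$.

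For \eqref{eq:cost}, I would make the constant $d$ of \eqref{eq:rate} explicit. The plan is to use the form coming from~\cite[\S4.3]{cv1}, which I take to be
\[
d=\frac{\|\kappa_c\|_\infty+\gamma(c)}{\|\kappa_c\|_\infty}\,d_0^{2\varpi_c}(C_0B_c)^{1-2\varpi_c}.
\]
This is the interpolation between the short-time bound (driven by the killing rate) and the contraction bound (driven by $\gamma$ and the moment $B_c$), optimised over the splitting time. Note that $1-2\varpi_c=\|\kappa_c\|_\infty/(\|\kappa_c\|_\infty+\gamma(c))>0$. Substituting the lower bound on $B_c$ gives
\[
d\ge\frac{\|\kappa_c\|_\infty+\gamma}{\|\kappa_c\|_\infty}\,d_0^{2\varpi_c}\Bigl(\frac{C_0C_c}{\mu^\ast}\Bigr)^{1-2\varpi_c}(c_{\max}-c)^{-(1-2\varpi_c)} .
\]
Requiring $dN^{-\varpi_c}\le\epsilon$ (after normalising $\|\varphi\|_\infty$) yields $N\ge(d/\epsilon)^{1/\varpi_c}$. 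The exponent of $(c_{\max}-c)^{-1}$ is then
\[
\frac{1-2\varpi_c}{\varpi_c}=\frac{2\|\kappa_c\|_\infty}{\gamma(c)}.
\]
This is a direct computation from the definition of $\varpi_c$, and it produces exactly $K_\epsilon(c)$ of \eqref{eq:Keps}.

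The main obstacle is justifying that the guaranteed particle number is governed by this explicit $d$. The guarantee is only as good as the certified constant, so \eqref{eq:cost} must be read as a lower bound on the certified $N_\epsilon$. I would phrase it that way and check the exact exponents of $d_0$ and $C_0B_c$ against~\cite[\S4.3]{cv1}.

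Finally, I would check the uniformity claims in turn.
\begin{itemize}
\item $\|\kappa_c\|_\infty\in[\|\kappa\|_\infty,\Lbar)$, which is bounded away from $0$ and bounded above.
\item $\gamma(c)\in[\gamma_-,\gamma_+]$.
\item $\varpi_c$ therefore stays in a compact subinterval of $(0,\tfrac12)$.
\item $\inf C_c>0$.
\end{itemize}
Together these make each factor in $K_\epsilon$ bounded below uniformly. The exponent $2\|\kappa_c\|_\infty/\gamma(c)$ then lies in $[2\|\kappa\|_\infty/\gamma_+,2\Lbar/\gamma_-]$ and is in particular bounded below by a positive constant. Hence $(c_{\max}-c)^{-2\|\kappa_c\|_\infty/\gamma(c)}\to\infty$ as $c\uparrow c_{\max}$, which gives the divergence in \eqref{eq:cost}.
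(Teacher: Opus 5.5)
Your proposal is correct and follows the paper's proof essentially step for step: the margin bound comes from Theorem~\ref{thm:envelope} via $\Lbar-(p_0+c)\mu^\ast=\mu^\ast(c_{\max}-c)$, and the particle-number bound comes from the splitting-time optimisation of \cite[\S4.3]{cv1}, which gives $d\propto B_c^{1-2\varpi_c}$, followed by the inversion $N\ge(d/\epsilon)^{1/\varpi_c}$ with exponent $(1-2\varpi_c)/\varpi_c=2\|\kappa_c\|_\infty/\gamma(c)$. Your uniformity check, based on $\|\kappa\|_\infty\le\|\kappa_c\|_\infty<\Lbar$ and $\gamma_-\le\gamma(c)\le\gamma_+$, matches the paper's, and so does your reading of \eqref{eq:cost} as a lower bound on the \emph{certified} particle number, which the paper leaves implicit.
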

\begin{proof}
By Theorem~\ref{thm:envelope} applied to $u=c\mu$, any admissible $\lambda_1(c)$
satisfies $\|\kappa_c\|_\infty<\lambda_1(c)\le\Lbar$; with
$\|\kappa_c\|_\infty=(p_0+c)\mu^\ast$ from~\eqref{eq:eff} and $c_{\max}$
of~\eqref{eq:cmax},
$\Lbar-\|\kappa_c\|_\infty=\Lbar-(p_0+c)\mu^\ast=\mu^\ast(c_{\max}-c)$, which
is~\eqref{eq:margin}; the moment bound $B_c$ is therefore at least
$C_c/[\mu^\ast(c_{\max}-c)]$.
\rev{Minimising $Ae^{\|\kappa_c\|_\infty t}+C_0B_ce^{-\gamma(c)t}$ over $t>0$,
with $A:=d_0N^{-1/2}$, as in~\cite[\S4.3]{cv1}, gives the value
$\frac{\|\kappa_c\|_\infty+\gamma(c)}{\|\kappa_c\|_\infty}
A^{2\varpi_c}(C_0B_c)^{1-2\varpi_c}$, so the constant of~\eqref{eq:rate}
satisfies $d\propto B_c^{\,1-2\varpi_c}$. Then $dN^{-\varpi_c}\le\epsilon$
gives $N\ge(d/\epsilon)^{1/\varpi_c}$, and substituting the lower bound on
$B_c$ yields~\eqref{eq:cost}--\eqref{eq:Keps}, the exponent being
$(1-2\varpi_c)/\varpi_c=2\|\kappa_c\|_\infty/\gamma(c)$. Uniformity follows from
$\|\kappa\|_\infty\le\|\kappa_c\|_\infty<\Lbar$ and
$\gamma_-\le\gamma(c)\le\gamma_+$.}
\end{proof}

Thus $c_{\max}$ is not crossed at constant cost: \rev{by~\eqref{eq:cost} the
certified accuracy degrades at an algebraic rate set by the ratio
$\|\kappa_c\|_\infty/\gamma(c)$}, \pl{so the last part of the admissible range
is of little practical use, even though it is formally certifiable: the gain
limit can be read as a loss of robustness margin.}



\section{Certainty-equivalence feedback}
\label{sec:feedback}
\rev{Sections~\ref{sec:main} and~\ref{sec:prop} answer question~(i) of
Section~\ref{sec:problem}: they say which decay rates can be reached, and at
what certified cost. We now turn to question~(ii) and build a feedback law that
places the conditioned equilibrium at a chosen target. The problem differs from
ordinary regulation in two ways. The plant does not need to be stabilised, so
the loop \pl{picks} the attractor instead of creating it. And the regulated output
is never measured directly: all we have is the particle estimate
$\widehat\lambda_0^N=\mathcal X^N(\kappa_{c\mu})$, with the accuracy given by
Fact~\ref{fact:selection}.}

\rev{This suggests a certainty-equivalence law: use the estimate in place of
the unknown $\lambda_0(c)$, and integrate the output error. One more ingredient
is needed. The gain must stay in a compact interval $[0,c_1]$ with
$c_1<c_{\max}$, and the reason is not actuator saturation. By
Theorem~\ref{thm:cmax}(b) there is no admissible certificate once
$c\ge c_{\max}$, so the estimate loses its guarantee exactly where the input
constraint becomes active. We therefore enforce the constraint of
Theorem~\ref{thm:envelope} inside the loop, as a projection. The margin
$c_{\max}-c_1$ is what pays for the sensing accuracy, as
Proposition~\ref{prop:cost} shows.}

\subsection{\rev{The projected gain law}}
\label{sec:gainlaw}
\begin{proposition}[\rev{ISS of the certainty-equivalence gain law}]
\label{prop:feedback}
Fix $c_1<c_{\max}$ and let $\ell:=\min_{c\in[0,c_1]}\lambda_0'(c)>0$, which exists and
is positive by Proposition~\ref{prop:sens}. Let
$\lambda^\star\in[\lambda_0(0),\lambda_0(c_1)]$ and
$c^\star:=\lambda_0^{-1}(\lambda^\star)$. Suppose the estimator error is
uniformly bounded on the gain range,
\begin{equation}
\rev{\bigl|\widehat\lambda_0^N(c)-\lambda_0(c)\bigr|\le e_N
\qquad\text{for all }c\in[0,c_1],}
\label{eq:esterr}
\end{equation}
and let $\widehat c(\cdot)$ solve the projected gain update
\begin{equation}
\dot{\widehat c}=k\,\Pi_{[0,c_1]}\!\bigl(\widehat c,\;\lambda^\star-\widehat\lambda_0^N(\widehat c)\bigr),
\qquad k>0,
\label{eq:gainlaw}
\end{equation}
\rev{where $\Pi_{[0,c_1]}$ zeroes the velocity component pointing out of
$[0,c_1]$. Then $[0,c_1]$ is forward invariant and, for every
$\widehat c(0)\in[0,c_1]$,}
\begin{equation}
\rev{|\widehat c(t)-c^\star|\ \le\ |\widehat c(0)-c^\star|\,e^{-\frac{k\ell}{2} t}
\ +\ 2\ell^{-1}e_N ,\quad t\ge0 .}
\label{eq:iss}
\end{equation}
\rev{Hence~\eqref{eq:gainlaw} is ISS with respect to the estimation error, with
linear asymptotic gain $2\ell^{-1}$, and any equilibrium of~\eqref{eq:gainlaw}
in the interior of $[0,c_1]$ satisfies the sharper static bound
$|\widehat c-c^\star|\le\ell^{-1}e_N$.}
\end{proposition}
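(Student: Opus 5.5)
We treat \eqref{eq:gainlaw} as a scalar projected ODE and use the error $z:=\widehat c-c^\star$ as a Lyapunov coordinate. Throughout, we regard $\widehat\lambda_0^N(\cdot)$ as a given, possibly irregular, function satisfying \eqref{eq:esterr}, and we interpret solutions in the Carathéodory/Filippov sense.

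\emph{Forward invariance.} This follows directly from the definition of $\Pi_{[0,c_1]}$. At $\widehat c=0$ the projected velocity is nonnegative, and at $\widehat c=c_1$ it is nonpositive. Hence the vector field is inward-pointing or tangent on the boundary, and a standard viability argument (Nagumo) keeps $\widehat c(t)\in[0,c_1]$.

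\emph{Decomposition of the drift.} Write the unprojected drift as
\[
\lambda^\star-\widehat\lambda_0^N(\widehat c)=-\bigl(\lambda_0(\widehat c)-\lambda_0(c^\star)\bigr)+\bigl(\lambda_0(\widehat c)-\widehat\lambda_0^N(\widehat c)\bigr).
\]
By the mean value theorem and Proposition~\ref{prop:sens}, $\lambda_0(\widehat c)-\lambda_0(c^\star)=\lambda_0'(\theta)z$ with $\lambda_0'(\theta)\ge\ell$, since $\theta\in[0,c_1]$. The second term is bounded by $e_N$. Set $V=z^2/2$. When the projection is inactive,
\[
\dot V\le -k\ell z^2+k|z|e_N .
\]
When it is active, we claim $\dot V\le0$ and that the unprojected bound still dominates. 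Indeed, projection zeroes the velocity only when it points outward. At $\widehat c=c_1\ge c^\star$ this means $z\ge0$ and the unprojected $z\dot z$ is positive, so replacing it by $0$ only decreases $\dot V$. The case $\widehat c=0$ is symmetric. Hence $\dot V\le -k\ell z^2+k|z|e_N$ holds everywhere.

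\emph{ISS estimate.} Apply Young's inequality, $|z|e_N\le \tfrac{\ell}{2}z^2+\tfrac{1}{2\ell}e_N^2$, to get $\dot V\le -k\ell V+\tfrac{k}{2\ell}e_N^2$. Comparison then gives
\[
V(t)\le V(0)e^{-k\ell t}+\tfrac{1}{2\ell^2}e_N^2 .
\]
Taking square roots and using $\sqrt{a+b}\le\sqrt a+\sqrt b$ yields
\[
|z(t)|\le|z(0)|e^{-k\ell t/2}+\ell^{-1}e_N ,
\]
which is stronger than \eqref{eq:iss}. The factor $2\ell^{-1}$ in the statement leaves slack. An alternative route, working directly with $|z|$ and the inequality $\tfrac{d}{dt}|z|\le -k\ell|z|+ke_N$, gives the same conclusion; either suffices.

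\emph{Static bound.} At an interior equilibrium the drift vanishes, so $\widehat\lambda_0^N(\widehat c)=\lambda^\star$. Then
\[
\ell|\widehat c-c^\star|\le|\lambda_0(\widehat c)-\lambda_0(c^\star)|=|\lambda_0(\widehat c)-\widehat\lambda_0^N(\widehat c)|\le e_N .
\]

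\emph{Main obstacle.} The hard part is rigour about solutions rather than the estimate itself. The map $c\mapsto\widehat\lambda_0^N(c)$ is random and need not be continuous, and the projection makes the vector field discontinuous. I would first state that \eqref{eq:gainlaw} is understood in the Carathéodory/Filippov sense. I would then verify that the differential inequality for $V$ holds for almost every $t$ along any solution, using absolute continuity of $V(\widehat c(t))$. This justifies the comparison lemma without requiring uniqueness of solutions.
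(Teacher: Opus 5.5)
Your proof is correct and shares the paper's skeleton: the same quadratic function of $z=\widehat c-c^\star$, the same split of the drift into a mean-value term with slope at least $\ell$ plus an estimator term bounded by $e_N$, the same observation that the projection acts only when $zv\ge 0$ (so $\dot V\le kzv$ everywhere), and the same static bound at interior equilibria.

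The difference is in how you close the ISS estimate. The paper splits into cases. On $\{|z|\ge 2e_N/\ell\}$ it absorbs $e_N$ into half of $\ell|z|$ to get $\dot V\le -k\ell V$. It then argues that the residual ball $\{|z|\le 2e_N/\ell\}$ is forward invariant and combines the two regimes into $\max\{|z(0)|e^{-k\ell t/2},\,2e_N/\ell\}$. The factor $2$ in \eqref{eq:iss} is an artifact of that halving.

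You instead apply Young's inequality once, obtaining the global dissipation inequality $\dot V\le -k\ell V+\frac{k}{2\ell}e_N^2$, and integrate it by comparison. This buys two things.
\begin{enumerate}
\item It needs no invariance argument for a residual set, only an a.e.\ differential inequality along an absolutely continuous solution. That is exactly what the Carath\'eodory setting you flag provides.
\item It gives asymptotic gain $\ell^{-1}$, so the dynamic bound already matches what the statement calls the ``sharper'' static bound. Your $|z|$ variant, $\frac{d}{dt}|z|\le -k\ell|z|+ke_N$ a.e., does better than the ``same conclusion'' you claim: it gives $|z(t)|\le|z(0)|e^{-k\ell t}+e_N/\ell$, improving the transient rate to $k\ell$ as well.
\end{enumerate}
In exchange, the paper's case split exhibits an explicit forward-invariant residual ball, at the cost of the factor $2$.
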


\begin{proof}
\rev{Write $e(t):=\widehat c(t)-c^\star$, $W:=\frac{1}{2}e^2$, and
$v:=\lambda^\star-\widehat\lambda_0^N(\widehat c)$ for the unprojected velocity
direction, so that~\eqref{eq:gainlaw} reads
$\dot{\widehat c}=k\,\Pi_{[0,c_1]}(\widehat c,v)$.}

\rev{\emph{Step 1: the projection never increases $|e|$.} By construction
$\Pi_{[0,c_1]}(\widehat c,v)=v$ unless $\widehat c=0$ with $v<0$, or
$\widehat c=c_1$ with $v>0$, in which cases it vanishes. In the first case
$e=-c^\star\le0$ and $v<0$, so $ev\ge0$; in the second $e=c_1-c^\star\ge0$ and
$v>0$, so again $ev\ge0$. In both cases $e\,\Pi_{[0,c_1]}(\widehat c,v)=0\le ev$.
Therefore}
\begin{equation}
\rev{\dot W=e\,\dot{\widehat c}=k\,e\,\Pi_{[0,c_1]}(\widehat c,v)\ \le\ k\,e\,v ,}
\label{eq:proj}
\end{equation}
\rev{and $[0,c_1]$ is forward invariant, so $\widehat c(t)$ stays in the range
where~\eqref{eq:esterr} holds.}

\rev{\emph{Step 2: decomposition of the velocity.} Since
$\lambda^\star=\lambda_0(c^\star)$,}
\[
v=-\underbrace{\bigl[\lambda_0(\widehat c)-\lambda_0(c^\star)\bigr]}_{=:A}
+\underbrace{\bigl[\lambda_0(\widehat c)-\widehat\lambda_0^N(\widehat c)\bigr]}_{=:\delta} ,
\]
\rev{where $|\delta|\le e_N$ by~\eqref{eq:esterr}. By
Proposition~\ref{prop:sens}, $\lambda_0$ is $C^1$ on $[0,c_1]$ with
$\lambda_0'\ge\ell$, so the mean value theorem gives
$A=\lambda_0'(\xi)\,e$ for some $\xi$ between $c^\star$ and $\widehat c$, \pl{so that}}
\[
eA=\lambda_0'(\xi)\,e^2\ \ge\ \ell\,e^2 .
\]

\rev{\emph{Step 3: a decay inequality outside a residual set.} Combining the
two displays with $e\delta\le|e|\,e_N$,}
\[
ev=-eA+e\delta\ \le\ -\ell\,e^2+|e|\,e_N
=-|e|\bigl(\ell|e|-e_N\bigr).
\]
\rev{Assume $|e|\ge\frac{2e_N}{\ell}$. Then $e_N\le\frac{\ell|e|}{2}$, so
$\ell|e|-e_N\ge\frac{\ell|e|}{2}$ and}
\[
ev\ \le\ -\frac{\ell}{2}\,e^2=-\ell\,W .
\]
\rev{With~\eqref{eq:proj} this yields}
\begin{equation}
\rev{\dot W\ \le\ -k\ell\,W
\qquad\text{whenever }|e|\ge\frac{2e_N}{\ell}.}
\label{eq:decay}
\end{equation}

\rev{\emph{Step 4: invariance of the residual set and conclusion.} Let
\[
S:=\bigl\{|e|\le\frac{2e_N}{\ell}\bigr\} \text{ and } T:=\inf\{t\ge0:\ e(t)\in S\},
\]
with $T=+\infty$ if the infimum is over an
empty set. On $[0,T)$ the bound~\eqref{eq:decay} applies, and the comparison
lemma gives $W(t)\le W(0)e^{-k\ell t}$, that is,}
\[
|e(t)|\ \le\ |e(0)|\,e^{-\frac{k\ell t}{2}},\qquad t<T .
\]
\rev{On the boundary of $S$ we have $\dot W\le-k\ell W<0$ by~\eqref{eq:decay},
so $S$ is forward invariant and $|e(t)|\le\frac{2e_N}{\ell}$ for all $t\ge T$.
In either case}
\[
|e(t)|\ \le\ \max\Bigl\{|e(0)|\,e^{-\frac{k\ell t}{2}},\ \frac{2e_N}{\ell}\Bigr\}
\ \le\ |e(0)|\,e^{-\frac{k\ell t}{2}}+\frac{2e_N}{\ell},
\]
\rev{which is~\eqref{eq:iss}.}

\rev{\emph{Step 5: interior equilibria.} If $\widehat c$ is an equilibrium
in the interior of $[0,c_1]$ then $\Pi_{[0,c_1]}$ acts as the identity, so
$v=0$, i.e.\ $\widehat\lambda_0^N(\widehat c)=\lambda^\star$. Hence
$A=-\delta$ and, by Steps~2,
$\ell|e|\le|A|=|\delta|\le e_N$, that is,
$|e|\le\frac{e_N}{\ell}$.}
\end{proof}

\rev{The regulated variable is not the gain but the decay rate it produces, and
for that variable the sensitivity constant disappears.}
\begin{corollary}[\rev{Placement error}]
\label{cor:placement}
\rev{Under the hypotheses of Proposition~\ref{prop:feedback},}
\begin{equation}
\rev{\bigl|\lambda_0(\widehat c(t))-\lambda^\star\bigr|
\ \le\ \bigl|\lambda_0(\widehat c(0))-\lambda^\star\bigr|\,
e^{-\frac{k\ell t}{2}}+2e_N ,\quad t\ge0 .}
\label{eq:placement}
\end{equation}
\rev{The asymptotic placement error is therefore at most twice the estimator
error, however sensitive $\lambda_0$ may be to the gain.}
\end{corollary}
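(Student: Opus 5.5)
The plan is to transfer the gain-space bound~\eqref{eq:iss} of Proposition~\ref{prop:feedback} to decay-rate space. The naive route, multiplying~\eqref{eq:iss} by a Lipschitz constant of $\lambda_0$, would introduce the factor $\max\lambda_0'/\ell$; the point of the corollary is that this factor disappears. So I would not post-compose the gain bound. Instead I would rerun the Lyapunov argument directly on the output error
\[
E(t):=\lambda_0(\widehat c(t))-\lambda^\star=\lambda_0(\widehat c(t))-\lambda_0(c^\star),
\]
using $\widehat W:=\tfrac12E^2$.

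\emph{Step 1.} By Proposition~\ref{prop:sens}, $\lambda_0$ is $C^1$ and strictly increasing on $[0,c_1]$. Hence $E$ has the same sign as $e=\widehat c-c^\star$, and $E$ is absolutely continuous along trajectories, with
\[
\dot{\widehat W}=E\,\lambda_0'(\widehat c)\,\dot{\widehat c}.
\]

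\emph{Step 2.} I would reuse Step~1 of the proof of Proposition~\ref{prop:feedback}. The projection never has $\dot{\widehat c}$ with a sign opposite to $v$, and it acts only when $ev\ge0$. Since $\operatorname{sign}E=\operatorname{sign}e$ and $\lambda_0'>0$, this gives
\[
\dot{\widehat W}\le k\,\lambda_0'(\widehat c)\,E\,v .
\]

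\emph{Step 3.} With the decomposition of Step~2 of that proof, $v=-E+\delta$ and $|\delta|\le e_N$. Therefore
\[
Ev\le -E^2+|E|\,e_N .
\]
Whenever $|E|\ge 2e_N$ this is at most $-\tfrac12E^2$, so
\[
\dot{\widehat W}\le -k\,\lambda_0'(\widehat c)\,\widehat W\le -k\ell\,\widehat W .
\]
This is exactly where the sensitivity constant cancels: the true output error enters $v$ with unit coefficient, and $\lambda_0'\ge\ell$ affects only the decay rate, never the size of the residual set.

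\emph{Step 4.} Repeat Step~4 of the earlier proof with the residual set $S'=\{|E|\le 2e_N\}$. The comparison lemma gives $|E(t)|\le|E(0)|e^{-k\ell t/2}$ before the trajectory enters $S'$. The set $S'$ is forward invariant because $\dot{\widehat W}<0$ on its boundary, or trivially if $e_N=0$. Combining the two regimes with $\max\{a,b\}\le a+b$ yields~\eqref{eq:placement}.

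\emph{Main obstacle.} The delicate step is Step~2: I must check that the projection inequality survives the extra factor $\lambda_0'(\widehat c)E$ in place of $e$. This needs the sign agreement between $E$ and $e$, which holds by strict monotonicity, and positivity of $\lambda_0'$ up to the endpoints $0$ and $c_1$, which holds because $[0,c_1]\subset[0,c_{\max})$.
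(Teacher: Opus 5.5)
Your proposal is correct and takes essentially the same route as the paper's proof. Both use the Lyapunov function $\tfrac12\bigl(\lambda_0(\widehat c)-\lambda^\star\bigr)^2$, the decomposition $v=-E+\delta$, the projection inequality, decay at rate $k\ell$ outside $\{|E|\le 2e_N\}$, and forward invariance of that set; the only difference is cosmetic: you get the sign condition at the active constraints from the sign agreement of $E$ and $e$ (strict monotonicity of $\lambda_0$), whereas the paper checks directly that $\lambda_0(0)-\lambda^\star\le0$ and $\lambda_0(c_1)-\lambda^\star\ge0$.
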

\begin{proof}
\rev{Write $y(t):=\lambda_0(\widehat c(t))-\lambda^\star$,
$V:=\frac{1}{2}y^2$, and let $v:=\lambda^\star-\widehat\lambda_0^N(\widehat c)$
be as in the proof of Proposition~\ref{prop:feedback}.}

\rev{\emph{Step 1: a scalar equation for the output error.} Since
$\lambda^\star=\lambda_0(c^\star)$, Step~2 of that proof gives $v=-y+\delta$,
where $\delta:=\lambda_0(\widehat c)-\widehat\lambda_0^N(\widehat c)$ and
$|\delta|\le e_N$ by~\eqref{eq:esterr}. Differentiating $y$
along~\eqref{eq:gainlaw},}
\[
\dot y=\lambda_0'(\widehat c)\,\dot{\widehat c}
=k\,\lambda_0'(\widehat c)\,\Pi_{[0,c_1]}(\widehat c,v).
\]

\rev{\emph{Step 2: the projection never increases $|y|$.} The projection is
inactive except at $\widehat c=0$ with $v<0$, or at $\widehat c=c_1$ with
$v>0$. In the first case $y=\lambda_0(0)-\lambda^\star\le0$, since
$\lambda^\star\ge\lambda_0(0)$; in the second
$y=\lambda_0(c_1)-\lambda^\star\ge0$, since $\lambda^\star\le\lambda_0(c_1)$.
Either way $yv\ge0$, while $y\,\Pi_{[0,c_1]}(\widehat c,v)=0$. Hence
$y\,\Pi_{[0,c_1]}(\widehat c,v)\le yv$ at all times, and $\lambda_0'>0$ gives}
\begin{equation}
\rev{\dot V=y\dot y\ \le\ k\,\lambda_0'(\widehat c)\,y\,v .}
\label{eq:projy}
\end{equation}

\rev{\emph{Step 3: decay outside a residual set.} From $v=-y+\delta$ and
$y\delta\le|y|\,e_N$,}
\[
yv=-y^2+y\delta\ \le\ -|y|\bigl(|y|-e_N\bigr).
\]
\rev{Suppose $|y|\ge2e_N$. Then $e_N\le\frac{|y|}{2}$, so
$|y|-e_N\ge\frac{|y|}{2}$ and $yv\le-\frac{y^2}{2}=-V$. Since
$\lambda_0'\ge\ell$ on $[0,c_1]$ and $yv\le0$, inequality~\eqref{eq:projy}
yields}
\begin{equation}
\rev{\dot V\ \le\ -k\ell\,V
\qquad\text{whenever }|y|\ge2e_N .}
\label{eq:decayy}
\end{equation}

\rev{\emph{Step 4: conclusion.} Let $S:=\{|y|\le2e_N\}$ and
$T:=\inf\{t\ge0:\ y(t)\in S\}$, with $T=+\infty$ if $y$ never enters $S$. On
$[0,T)$ the comparison lemma applied to~\eqref{eq:decayy} gives
$V(t)\le V(0)e^{-k\ell t}$, that is,
$|y(t)|\le|y(0)|e^{-\frac{k\ell t}{2}}$. On the boundary of $S$ we have
$\dot V<0$ by~\eqref{eq:decayy}, so $S$ is forward invariant and
$|y(t)|\le2e_N$ for $t\ge T$. In both regimes}
\[
|y(t)|\ \le\ \max\Bigl\{|y(0)|\,e^{-\frac{k\ell t}{2}},\ 2e_N\Bigr\}
\ \le\ |y(0)|\,e^{-\frac{k\ell t}{2}}+2e_N ,
\]
\rev{which is~\eqref{eq:placement}.}
\end{proof}

\subsection{\rev{Tuning, and the role of the estimator}}
\label{sec:tuning}
\rev{Proposition~\ref{prop:feedback} and Corollary~\ref{cor:placement} treat
$\widehat\lambda_0^N$ as an external signal of known accuracy. Two questions are
left open by that reading: how fast the loop may be run before the estimator
can no longer follow it, and how the bound~\eqref{eq:esterr} relates to what
Fact~\ref{fact:selection} actually provides. We take them in turn.}

\subsubsection*{Choice of $k$ and timescale separation}
\rev{The residual terms in~\eqref{eq:iss} and~\eqref{eq:placement} do not
depend on $k$; the gain only fixes the rate $\frac{k\ell}{2}$ at which the
initial error is forgotten. Nothing in the analysis therefore prevents taking
$k$ arbitrarily large, \pl{which cannot be correct}: the particle system needs a time
of order $\frac{1}{\gamma(c)}$ to settle, and~\eqref{eq:esterr} describes its
stationary empirical measure, not its transient. The law~\eqref{eq:gainlaw} is
thus meaningful only when $k\ell\ll\gamma(c)$, so that the estimator can follow
the current gain. Removing this separation would require the
singular-perturbation framework of~\cite{kokotovic} together with the ISS
small-gain theorem of~\cite{jiangteel}, and would also need a lower bound on
$\gamma(c)$, which is not available (Section~\ed{\ref{sec:concl}}).}

\subsubsection*{From the particle bound to~\eqref{eq:esterr}}
\rev{Hypothesis~\eqref{eq:esterr} holds pathwise and uniformly in $c$, whereas
Fact~\ref{fact:selection} gives, for each fixed $c$, a bound in expectation:
$\E|\widehat\lambda_0^N(c)-\lambda_0(c)|\le d\|\kappa_c\|_\infty N^{-\varpi}
\le d\,\Lbar\,N^{-\varpi}$ by~\eqref{eq:env0}. Read $e_N$ as the random
quantity $\sup_{c\in[0,c_1]}|\widehat\lambda_0^N(c)-\lambda_0(c)|$. Taking
expectations in~\eqref{eq:iss} and~\eqref{eq:placement} then gives}
\begin{align}
\rev{\limsup_{t\to\infty}\E\bigl|\lambda_0(\widehat c(t))-\lambda^\star\bigr|}
&\le2\,\E[e_N],\label{eq:expbounds1}\\
\limsup_{t\to\infty}\E\bigl|\widehat c(t)-c^\star\bigr|
&\le\frac{2\,\E[e_N]}{\ell}.\label{eq:expbounds2}
\end{align}
\rev{The gap is that Fact~\ref{fact:selection} controls $\sup_{c}\E|\widehat\lambda_0^N(c)-\lambda_0(c)|$, whereas~\eqref{eq:expbounds1}--\eqref{eq:expbounds2} need $\E\bigl[\sup_{c}|\cdot|\bigr]$, a strictly stronger quantity.}

\rev{One hypothesis of the design is not covered by either remark. Placing the
decay rate is only useful if it places the equilibrium itself, and this
requires $c\mapsto\nu_Y(c)$ to be TV-Lipschitz on $[0,c_1]$: with constant
$L_\nu$, \eqref{eq:expbounds2} gives}
\[
\rev{\limsup_{t\to\infty}
\E\bigl\|\nu_Y(\widehat c(t))-\nu_Y(c^\star)\bigr\|_{\mathrm{TV}}
\le\frac{2L_\nu\,\E[e_N]}{\ell}.}
\]
\rev{No quantitative form of this Lipschitz property is available at present,
and we state it as an assumption rather than derive it.}



\section{Numerical illustration: foot-and-mouth surveillance}
\label{sec:num}
We use the model calibrated in~\cite{companion} to the 2001 north Cumbrian
foot-and-mouth outbreak. An individual is an infected premises, its lifetime the
infectious period, its offspring the premises it infects, so
$m=R_{\mathrm{eff}}$. The infectious period is $\Gamma(2,\theta)$ with
$\E[T]=8$~d, hence $\theta=4$~d and $\mu(a)=a/[\theta(\theta+a)]$ increasing to
$\mu^\ast=\mu_\infty=0.25$~d$^{-1}$; offspring are Poisson, so $p_0=e^{-m}$; the
decline of the surveillance record gives $\lambda_0=0.0235$~d$^{-1}$ and, by
Euler--Lotka inversion $m=(1-\lambda_0\theta)^2$, $m=0.8208$ and $p_0=0.4401$.
Then $\|\kappa\|_\infty=0.1100$ and $\Lbar=0.1400$~d$^{-1}$, so (H5) holds with
margin $0.0300$~d$^{-1}$; (H6) holds for every $\alpha>5.96$ (the companion's
weaker form requires only $\alpha>\edm{4.68}$), the Poisson law having all moments.

\emph{The envelope.} By~\eqref{eq:envelope} the certifiable removal hazard is
bounded by $u^\star(a)=0.1400-0.4401\,\mu(a)$, falling from $0.140$~d$^{-1}$ at
age zero to $0.030$~d$^{-1}$ at large ages (Table~\ref{tab:env},
Fig.~\ref{fig:authority}A): a freshly infected premises may be culled
preventively at up to $0.14$ per day, a long-standing one at only $0.03$, because
at large ages the natural childless removal $p_0\mu(a)$ has already consumed
almost all the margin.
\begin{table}[H]
\centering
\caption{Certifiable removal envelope $u^\star(a)=\Lbar-p_0\mu(a)$, calibrated
FMD model.}
\label{tab:env}
\footnotesize
\setlength{\tabcolsep}{5pt}
\begin{tabular}{@{}|l|c|c|c|c|c|c|@{}}
\hline
age $a$ (d) & $0$ & $2$ & $4$ & $8$ & $16$ & $\infty$ \\ \hline
$\mu(a)$ (d$^{-1}$) & $0$ & $0.083$ & $0.125$ & $0.167$ & $0.200$ & $0.250$\\\hline
$u^\star(a)$ (d$^{-1}$) & $0.140$ & $0.103$ & $0.085$ & $0.067$ & $0.052$ & $0.030$\\
\hline
\end{tabular}
\end{table}

\emph{\pl{Comparing the three shapes.}} Solving~\eqref{eq:ELu} by quadrature for the three
shapes gives Table~\ref{tab:ladder}. The proportional actuator, admissible up to
$c_{\max}=1-2p_0=0.1199$ by~\eqref{eq:cmax}, attains
$\lambda_0=0.0390$~d$^{-1}$, i.e.\ $27.8\%$ of the barrier; the constant
actuator, admissible up to $0.0300$~d$^{-1}$, attains $0.0535$ exactly, in
accordance with the shift identity~\eqref{eq:shift} of
Corollary~\ref{cor:shape}; \rev{and the envelope-saturating value
$\Lbar+\nu^\star=0.1050$ of Theorem~\ref{thm:sup}(iii), i.e.\ $75.0\%$, is
approached but not attained, with $\nu^\star=-0.0351$ and
$\E[\xi\mid\xi\ge1]=1.466$.} \pl{Reshaping the actuator gives a factor $2.69$ in
decay rate at the same level of certifiability, and brings the mean persistence
of the conditioned outbreak from $25.7$ down to $9.5$ days.}
\begin{table}[t]
\centering
\caption{Reachable decay rates by actuator shape, calibrated FMD model. All
entries by quadrature on~\eqref{eq:ELu}; barrier $\Lbar=0.1400$~d$^{-1}$.}
\label{tab:ladder}
\footnotesize
\setlength{\tabcolsep}{4pt}
\begin{tabular}{@{}lccc@{}}
\hline
actuator & $\sup\lambda_0$ (d$^{-1}$) & \% of $\Lbar$ & $1/\lambda_0$ (d)\\
\hline
none, $u\equiv0$                       & $0.0235$ & $16.8$ & $42.6$\\
proportional, $u=c\mu$, $c\to c_{\max}$ & $0.0390$ & $27.8$ & $25.7$\\
constant, $u\equiv c_0\to0.030$         & $0.0535$ & $38.2$ & $18.7$\\
envelope-saturating, $u\to u^\star$     & $0.1050$ & $75.0$ & $9.5$\\
\hline
barrier (unattainable)                  & $0.1400$ & $100$  & $7.1$\\
\hline
\end{tabular}
\end{table}

\emph{The proportional family in detail.} Here $\lambda_0'(0)=0.130$ and
$\lambda_0'\approx0.128$ as $c\uparrow c_{\max}$, well below the \pl{rough} bound
$\mu^\ast=0.25$ of Proposition~\ref{prop:sens}, and
$\lambda_0(0)+\mu^\ast c_{\max}=0.0535$ is exactly the constant-actuator
supremum, as Corollary~\ref{cor:shape} predicts. Numerically $\lambda_0(c)$
meets the barrier only at $\bar c=0.9353$, about $7.8$ times $c_{\max}$ and well
past the separation bound $\bar c_{\mathrm{lb}}=0.3808$ of
\rev{Remark~\ref{prop:separation}}: within this family the actuator is stopped
by certifiability, not by physics. By Proposition~\ref{prop:cost} the certified
margin $\mu^\ast(c_{\max}-c)$ falls from $0.0300$ at $c=0$ to $0.0025$ at
$c=0.11$ and $0.0012$ at $c=0.115$, \rev{\pl{so the guaranteed moment bound
becomes twelve and twenty-five times larger}}.

\rev{\emph{Feedback.} Take $c_1=0.11$, leaving a certified margin
$\mu^\ast(c_{\max}-c_1)=0.0025$~d$^{-1}$ and giving
$\ell=\min_{[0,c_1]}\lambda_0'=0.128$. By Corollary~\ref{cor:placement} the
asymptotic placement error on the decay rate is at most $2e_N$, whereas
by~\eqref{eq:iss} the gain only settles within
$\frac{2e_N}{\ell}=15.6\,e_N$ of $c^\star$. The loop therefore inherits the
estimator accuracy on the regulated output without amplification, and pays the
factor $\ell^{-1}$ only on the gain itself.}

\begin{figure}[t]
\centering
\includegraphics[width=1\linewidth]{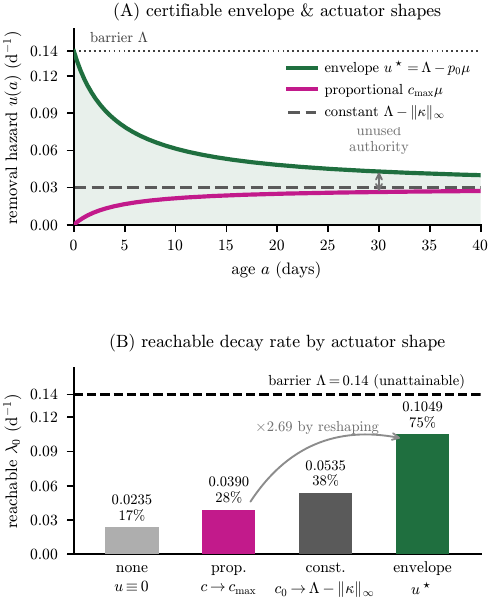}
\caption{(A) The certifiable envelope $u^\star(a)=\Lbar-p_0\mu(a)$ (shaded:
admissible) and the three actuator shapes. The proportional command increases
with age while the envelope decreases, \pl{so it saturates at large ages and
leaves authority unused at small ones}. (B) The resulting decay rates against the barrier
$\Lbar$, invariant under childless actuation; the supremum over all certifiable
actuators is $\Lbar+\nu^\star$ (Thm.~\ref{thm:sup}), \ed{approached, but not
attained,} by saturating the envelope.}
\label{fig:authority}
\end{figure}

\emph{Scope.} Under Poisson offspring with $\mu^\ast=\mu_\infty$, (H5) reads
$m>\log2$, so with (H2) the admissible range is
$R_{\mathrm{eff}}\in(\log2,1)=(0.693,1)$: \pl{the weakly subcritical range,
where an outbreak lasts long enough for $\nu_Y$ to be the right reference}, and
where our calibration sits. Within that band $c_{\max}=1-2e^{-m}$ is
increasing in $m$, from $0$ at $m\downarrow\log2$ to $1-2e^{-1}=0.264$ at
$m\uparrow1$: certifiable authority is \pl{smallest} where the population is most
strongly subcritical. The restriction $m>\log2$ is a property of the
measure-valued lift~\cite[Rem.~3.10]{companion}, not a statement about the
epidemic. \rev{\pl{In practice, removal gives a limited speed-up and an
equilibrium that can be certified and monitored, whereas vaccination or movement
restrictions act on $m$ and move the ceiling itself.}}



\section{Conclusions and Prospects}
\label{sec:concl}
\rev{We have shown that preventive removal acting on a subcritical age-structured
branching population has a \pl{limited} authority over its conditioned
equilibrium\pl{, and that the limit comes from the structure of the problem}. The input is matched to the killing rate but unmatched with respect
to the Foster--Lyapunov drift (Lemma~\ref{lem:invariance}), which yields a
closed-form, control-law-independent input constraint set
(Theorem~\ref{thm:envelope}) and, through the monotonicity of the input--output
map, the exact supremum $\Lbar+\nu^\star$ of the reachable decay rates
(Theorem~\ref{thm:sup}). The deficit $|\nu^\star|$ is explicit and is set by the
offspring law alone. Along the admissibility boundary the achievable decay rate
is \pl{set} by the shape of the actuator rather than its \pl{size}
(Corollary~\ref{cor:shape}), a factor of $2.69$ on the calibrated foot-and-mouth
model, and the certainty-equivalence law of Section~\ref{sec:feedback} places
the equilibrium with an error proportional to that of the particle estimator.

\pl{Three restrictions apply.} (i) \emph{Actuator class.} The
ceiling is a limitation of \pl{childless} actuation, not of feedback \pl{in
general}: an actuator that \pl{reduces} onward transmission acts on $m$, hence on $\Lbar$ itself,
and Theorem~\ref{thm:sup} does not apply to it. (ii) \emph{Necessity versus
sufficiency.} Theorem~\ref{thm:envelope} is a necessary condition, so $\calU$ is
an outer bound on $\calU_{\mathrm c}$ and Theorem~\ref{thm:sup} an upper bound
on certifiable performance; a matching sufficiency for a general $u$ requires
the age-dependent controlled reproductive value of~\eqref{eq:agecancel}, and the
certified statements of Section~\ref{sec:prop} are accordingly \pl{limited} to the
proportional family. (iii) \emph{Certificate versus equilibrium}, as \pl{set out}
in Remark~\ref{rem:cert}.

\pl{The largest extension} concerns the supercritical regime $m>1$, where
(H2) fails, the population survives with positive probability, and the Yaglom
limit is no longer the relevant operating point. \pl{There the control problem
becomes a different one}: the plant is unstable and the actuator must
\pl{stabilise} it rather than place an existing attractor. \pl{Proportional
removal does stabilise it}, since by~\eqref{eq:eff} the controlled offspring mean is $m_c=m/(1+c)$, so the
closed loop is subcritical as soon as $c>m-1$; combined with the certifiability
constraint $c<c_{\max}$ of Theorem~\ref{thm:cmax}, a certifiably stabilising
gain exists if and only if $m-1<c_{\max}$. When $\mu^\ast=\mu_\infty$ this reads
$m+2p_0<2$, which for Poisson offspring holds up to $m^\star\simeq1.594$; beyond
that threshold childless removal cannot certifiably stabilise the population,
whatever the gain. Turning this observation into a theorem requires replacing
the quasi-stationary framework by a \pl{full} stability analysis of the
measure-valued flow, and the transient guarantees of
Section~\ref{sec:feedback} by a stabilisation certificate valid before the
closed loop becomes subcritical.

\pl{Further open questions are} the sufficiency direction for a general $u$; the
optimal-shape problem, of which Corollary~\ref{cor:shape} compares only two
families and Theorem~\ref{thm:sup} solves only the extreme case, \pl{when
$\int u(s){\rm d}s$ is bounded instead of $u$ itself}; an explicit bound on
$\gamma(c)$, for which the non-conservative Harris framework
of~\cite{bansaye2022} \pl{is the natural tool} and which would make
Proposition~\ref{prop:cost} unconditional; and a central limit theorem for the
estimator~\cite{cerou} in place of the present first-moment rate.}



\end{document}